\documentclass[runningheads]{llncs}

\usepackage{geometry}
\usepackage{amsmath,amssymb}
\usepackage{enumitem}
\usepackage{hyperref}
\usepackage[ruled]{algorithm2e} 

\usepackage{graphicx}
\usepackage{url}
\usepackage{diagbox}
\usepackage[comma, numbers]{natbib}

\usepackage{bm}
\usepackage{zref-clever}
\zcsetup{nameinlink=false}
\newcommand{\Cref}[1]{\zcref[S]{#1}}
\newcommand{\cref}[1]{\zcref{#1}}

\AddToHook{env/theorem/begin}{\zcsetup{countertype={theorem=theorem}}}
\AddToHook{env/lemma/begin}{\zcsetup{countertype={theorem=lemma}}}
\AddToHook{env/proposition/begin}{\zcsetup{countertype={theorem=proposition}}}
\AddToHook{env/corollary/begin}{\zcsetup{countertype={theorem=corollary}}}
\AddToHook{env/claim/begin}{\zcsetup{countertype={theorem=claim}}}
\AddToHook{env/example/begin}{\zcsetup{countertype={theorem=example}}}
\zcRefTypeSetup{claim}{
Name-sg = Claim ,
name-sg = claim ,
Name-pl = Claims ,
name-pl = claims
}

\usepackage{tikz}
\usepackage{pgfplots}
\pgfplotsset{compat=1.18}

\usepackage{xcolor}

\DeclareMathOperator*{\E}{{\mathop{\mathbb{E}}}}
\DeclareMathOperator*{\I}{{\mathop{\mathbb{I}}}}

\newcommand{\realc}{{\underline{c}}}
\DeclareMathOperator{\conv}{\mathop{\mathrm{conv}}}

\title{Strategic Disclosure of Action Space in Principal-Agent Contracts}

\author{Xiaotie Deng \inst{1,2}
\and Ningyuan Li \inst{1}}
\institute{
Center on Frontiers of Computing Studies, School of Computer Science, Peking University\\
\email{liningyuan@pku.edu.cn}\and
Department of Computer Science, City University of Hong Kong\\
\email{csdeng@cityu.edu.hk}
}

\begin{document}
\maketitle

\begin{abstract}
We study strategic disclosure of the action space in principal–agent contracting, where an agent selects a disclosed action set to shape the principal's perception of her capabilities before contract design. Unaware of the strategic disclosure, the principal designs a revenue-optimal contract as if the disclosed action set were complete and accurate. We consider two variants distinguished by cost verifiability. When costs are unverifiable, the agent can extract the entire first-best surplus, leaving the principal with zero revenue. When costs are verifiable, we characterize the agent's optimal disclosure strategy in binary-outcome settings and, more generally, when the principal is restricted to linear contracts, reducing the agent's problem to a two-variable convex optimization problem. We prove that the agent can secure utility of at least a $1/e$ fraction of the first-best surplus, which also yields a $1/e$ welfare guarantee under optimal disclosure. While the principal's revenue can be arbitrarily small compared to the first-best surplus, when the ratio of maximum to minimum expected reward among non-null base actions is at most $L$, we establish a revenue guarantee of $\Theta(1/\log L)$ relative to the first-best surplus. We also compare utilities and welfare under strategic disclosure with their counterparts in the canonical model. Finally, we extend the agent's $1/e$ utility guarantee to general outcome spaces without restricting the principal to linear contracts. Our results show how strategic action-space disclosure changes the distribution of surplus while preserving a constant-factor welfare guarantee under the agent's optimal disclosure.

\end{abstract}
\section{Introduction}
\label{sec:introduction}
The principal-agent model with hidden actions provides a foundational framework for analyzing incentive design under moral hazard \cite{holmstrom1979,GrossmanHart1983}. A risk-neutral principal delegates a task to an agent whose unobservable action influences stochastic outcomes. 
A standard assumption in this canonical setting is that the principal knows the agent's entire feasible action space a priori, which enables the classical result that a revenue-optimal contract can be computed by solving linear programs. Although analytically convenient, this assumption often fails in real-world environments where the principal lacks complete knowledge of the agent's capabilities.
Recognizing this limitation,  a growing literature studies contract settings without full knowledge of available actions, such as learning optimal contracts through repeated interactions (e.g. \cite{HoSV14}), robust contract design under uncertainties (e.g. \cite{Carroll15AER,SimplevsOptimalDuttingRT19}), Bayesian contract design with private agent types (e.g. \cite{GuruganeshSW21,AlonDT21}).
In most of these approaches, the imperfect knowledge is an exogenous constraint for the principal: the principal is fully aware of her uncertainty regarding the agent's feasible actions, and proactively mitigates it through contract design. 

In reality, however, a principal can be unaware of her own lack of knowledge. In practice, a principal may infer the agent's action space from observational data in historical interactions, and uses it as her prior knowledge in contract design. This creates a vulnerability: A strategic agent can consistently choose actions to misrepresent her capabilities, shaping the principal's perceived action space over time, while the principal remains oblivious to both the manipulation and the incompleteness of her knowledge. For instance, a software developer might hide streamlined workflows during project scoping, or a sales representative might deliberately underperform to inflate perceived effort costs. By selectively showcasing skills or masking efficiencies, the agent endogenously determines the action space perceived by the principal. The principal designs the contract naively as in the canonical model, but based on this strategically disclosed action space.

In this paper, we formalize this phenomenon as \textit{strategic disclosure of action space} within the principal-agent model. While this phenomenon naturally arises in repeated contracting interactions, we abstract it into a three-stage model to isolate the pure effect of endogenous epistemic manipulation from dynamic frictions.
In the first stage, the agent privately knows the true feasible action space $F$, and selects a disclosed action set $A$. In the second stage, the principal, perceiving only $A$ and unaware of the strategic selection, designs a revenue-optimal contract based on $A$, anticipating the best-response action in $A$ as in the canonical model. In the third stage, the agent implements the best-response action anticipated by the principal, maintaining the principal's unawareness of manipulation. Note that the second and third stages represent the limit case of repeated interactions where the principal's beliefs have converged.

Critically, to credibly shape the principal's belief about the action set, the agent's behavior must be indistinguishable by the principal from that of an ``honest" agent who always plays the best-response action in action set $A$ when offered any contract. This imposes implementation constraints on the agent's selection of disclosed action set and the choice of action.
Specifically, the principal's power to verify the implementation of actions depends on what she can observe or verify in repeated interactions. We assume the principal can verify that the outcome distribution of a disclosed action is implemented by the agent, for example by monitoring the empirical distribution of outcomes over time, even though the played actions are hidden. On the other hand, the verifiability regarding action costs leads to two variants of the model. When costs are verifiable by the principal, every disclosed action must be exactly implemented at its stated cost.
When costs are unverifiable, the agent may associate disclosed actions with arbitrary stated costs, while actually implementing their outcome distribution at the minimum feasible costs.

\subsection{Main Results}
We analyze the agent's optimal strategy of strategic disclosure under the two variants.
Under unverifiable costs, we show that the agent extracts the entire first-best surplus, leaving the principal with zero revenue. The optimal strategic disclosure is a two-action set comprising a null action and a surplus-maximizing action with inflated cost.

Under verifiable costs, the agent's power of strategic disclosure is constrained but remains substantial.
Our results for this variant mainly focus on the binary-outcome setting, or equivalently, when the principal is restricted to linear contracts. We characterize the agent's optimal disclosure strategy by optimizing a disclosed cost curve and reduce the resulting problem to a two-variable convex optimization problem. This characterization yields structural properties that we use to analyze the agent's utility and the principal's revenue.

We prove that the agent's utility is at least a $1/e$ fraction of the first-best welfare, and this guarantee can be extended to general outcome space without restricting the principal to linear contracts. This result yields an immediate implication that strategic disclosure ensures approximately efficient social welfare.

The principal's revenue can be arbitrarily small relative to the first-best welfare in worst-case instances, but achieves a $\Theta(1/\log L)$ approximation when the ratio between the maximum and minimum reward among feasible non-null actions is at most $L$. This suggests that the approximate optimality of linear contracts persists under agent's strategic disclosure.

We also analyze the utility impacts of strategic disclosure  relative to the canonical model where the principal knows the true feasible action space. While the improvement of the agent's utility can be unbounded, the principal's revenue is never improved, and can become arbitrarily worse. The social welfare can be either improved or damaged.

Omitted proofs are provided in the appendix.

\subsubsection{Note added after acceptance.} This manuscript presents the results of our WINE 2026 submission. We have subsequently obtained results for the verifiable-cost setting with general outcomes via a different approach, which will be presented in a substantially revised manuscript.


\subsection{Related Work}
\label{subsec:related_work}
\textbf{Technology design.}
A closely related model is studied in \cite{Garrett2023OptimalTD}, where an agent can choose and commit to any production technology set to influence contract design. While motivated by different scenarios from ours, their model is mathematically equivalent to the special case of our verifiable-cost model without a feasibility constraint on action costs. They explicitly derive the agent's optimal choice, under which the agent's utility and the principal's utility each equal an $1/e$ fraction of the maximum action reward. In our paper, however, the agent's implementation constraint with a general feasible action set in our model leads to critical distinctions in results and analysis. As the agent's optimal strategy can only be implicitly characterized, different techniques are required to establish structural results and worst-case guarantees.

\textbf{Robust contract design with action uncertainty.}
The study of robust contract design under uncertainty about action set was pioneered by \cite{Carroll15AER} and extended in recent works \cite{KAMBHAMPATI2023105585,kambhampati2024,PengT24}. In this framework, the principal is aware of the exogenous uncertainty regarding the action set, and optimizes the contract for worst-case robustness. In our model, while the principal also lacks full knowledge of the agent's feasible actions, the principal is unaware of this ignorance, and designs a revenue-maximizing contract naively based on her (misinformed) belief. This distinction shifts the analytical focus from robustness of principal's contract design to strategic outcomes under agent's endogenous action set disclosure.

\textbf{Learning optimal contracts.}
Our model is motivated by the agents' potential manipulation power in settings where principals infer agent capabilities through repeated interactions.
Starting from \cite{HoSV14}, the problem of learning optimal contracts in repeated contracting with agents has been extensively studied \cite{DBLP:conf/sigecom/ZhuBYWJJ23,DBLP:conf/iclr/BacchiocchiC0024,DBLP:conf/sigecom/00020DH24,DBLP:conf/sigecom/GuruganeshSW023,DBLP:journals/tcs/CohenDK23}. Most existing approaches assume either that a new agent arrives each round or that a single agent responds myopically to offered contracts, thereby neglecting long-term incentives. Our model captures the power of an agent with long-term incentives to manipulate a learning principal who is unaware of the potential distortion.


\textbf{Information design and information disclosure.} 
Information design studies a sender's design of a signaling scheme which sends signals to influence a Bayesian receiver's decisions \cite{KamenicaG11,BERGEMANN2007580,doi:10.1086/657922,aeri20210504}.
Recent works have explored the application of information design in contract settings. \cite{CastiglioniC25} study a contract setting where the principal reveals information about a hidden state to the agent by a signaling scheme, jointly designed with the contract. \cite{BabichenkoTXZ26} study a third party's design of the principal's information structure monitoring the agent's action. 
Although our model involves the agent's strategic disclosure of action set, the principal does not make Bayesian updates as she is unaware of the manipulation, resulting in the non-Bayesian structure of our model, which significantly differs from the Bayesian signaling structure in existing literature.

\section{Model and Preliminaries}
\label{sec:model}
Our model abstracts the process of repeated interactions into a three-stage model, which extends the canonical hidden-action principal-agent model by introducing strategic disclosure of action space, where the agent shapes the principal's perception of her feasible action space to influence contract design. The model features a risk-neutral principal and a risk-neutral agent with limited liability. Outcomes are perfectly observable and contractible, while actions remain unobservable.

\subsection{Environment and True Feasible Actions}
Let $\mathcal{O}$ denote the space of observable outcomes. Each outcome $o \in \mathcal{O}$ yields reward $r(o)$ to the principal. Without loss of generality, assume there is a null outcome $o_0$ with $r(o_0)=0$, and that the rewards are normalized such that $r(o)\in[0,1]$. We assume $\mathcal{O}$ and $r(\cdot)$ are public knowledge.

An action $a=(q_a,c_a)$ of the agent is specified by an outcome distribution $q_a \in \Delta(\mathcal{O})$ and an effort cost $c_a \in[0,+\infty)$. For convenience, we denote $r(a):=\E_{o\sim q_a}[r(o)]$, i.e., the expected reward of an action $a$.
The agent has a closed set of \textit{base actions} $F_B\subseteq \Delta(\mathcal{O})\times[0,+\infty)$, which is privately known to the agent only. 
Assume there is a null action $a_0=(q_{a_0},c_{a_0})\in F_B$ such that $r(a_0)=\E_{o\sim q_{a_0}}[r(o)]=0$ and $c_{a_0}=0$.

We assume the agent has two primary capabilities:
\begin{itemize}
\item \textbf{Randomization over base actions.} The agent can play any convex combination of base actions. The principal cannot distinguish such mixtures from a single ``real" action, since actions are hidden and only the realized outcomes are observed.
\item \textbf{Arbitrary wasteful costs.} The agent can incur arbitrary additional costs without altering the outcome distribution.
\end{itemize}
The resulting \textit{feasible action space} $F$ is the upward-closure (in costs) of the convex hull of base action set $F_B$. Formally, let $\conv(F_B)$ denote the convex hull of $F_B$, and we define
\[
F:=\bigcup_{a=(q_a,c_a)\in \conv(F_B)}\left\{(q_a,c'):c'\geq c_a\right\}.
\]

We also denote the set of implementable outcome distributions by $F_q:=\{q\in\Delta(\mathcal{O}):\exists c\in[0,+\infty), (q,c)\in \conv(F_B)\}$.
For any outcome distribution $q\in\Delta(\mathcal{O})$, we denote the minimum feasible cost to implement $q$ as
\[\realc(q):=\begin{cases}
    \min\{c\geq 0:(q,c)\in F\},&q\in F_q,\\
    +\infty,&q\notin F_q.
\end{cases}\]

\subsection{Disclosure, Contracting, and Implementation}
The model consists of three stages.
In the first stage, the agent strategically decides a \emph{disclosed action set} $A \subseteq \Delta(\mathcal{O}) \times [0, \infty)$, which is non-empty and closed. 
In the second stage, upon observing $A$, the principal offers a contract $w: \mathcal{O} \to [0, \infty)$ satisfying limited liability ($w(o) \geq 0, \forall o \in \mathcal{O}$). 
In the third stage, the agent plays an action $a\in F$. To shape the principal's belief, the agent's selection of disclosed action set and action are subject to implementation constraints introduced later.

To shape the principal's belief about action set $A$, the agent must make the principal's observations indistinguishable from those generated by an ``honest" agent whose true action space is $A$ and always plays the best response action in $A$.
Therefore, when a disclosed action $\hat{a}\in A$ is anticipated by the principal, the agent must implement it by a feasible action $a\in F$ subject to the following constraints: 
\begin{itemize}
    \item We assume outcome distributions are verifiable by the principal, as the empirical outcome distributions are observed during repeated interactions. Thus, the anticipated outcome distribution $q_{\hat{a}}$ must be implemented exactly as $q_{a}=q_{\hat{a}}$.
    \item If costs are verifiable by the principal, the anticipated cost $c_{\hat{a}}$ must be implemented exactly as $c_a=c_{\hat{a}}$. Otherwise, the cost of the actually executed action may differ from the anticipated cost $c_{\hat{a}}$. This leads to two variant models detailed later.
\end{itemize}
Importantly, the implementation constraints apply to every disclosed action $a\in A$, including those not anticipated under the principal's optimal contract. This is because  to credibly shape the principal's belief that a specific action $a$ exists, the agent must be capable of consistently implementing it during interactions.

Believing the action set $A$, the principal \textit{anticipates} that the agent will play a best-response action in $A$:
\[
a^{\mathrm{ant}} \in \arg\max_{a \in A} \tilde{U}(a; w), \quad \text{where } \tilde{U}(a; w) := \mathbb{E}_{o \sim q_a}[w(o)] - c_a.
\]
Ties are broken in favor of the principal. 
Similar to the canonical principal-agent problem, the principal designs the contract to maximize her expected revenue anticipating this response. Specifically, her revenue from offering contract $w$ is
\begin{align*}
\mathrm{Rev}(w;A):=&\max_{a^{\mathrm{ant}}\in A}\E_{o\sim q_{a^{\mathrm{ant}}}}[r(o)-w(o)]\\
&\text{s.t. }a^{\mathrm{ant}}\in\arg\max_{a\in A}\tilde{U}(a;w).
\end{align*}
And the principal chooses a revenue-optimal contract $w^*\in\arg\max_{w}\mathrm{Rev}(w;A)$. Ties are broken in favor of the agent. 


\subsection{Two Variants of Cost Verifiability}
We now formalize how cost verifiability constrains the agent's actual implementation of an anticipated action. We consider two variants.

    \newcommand{\Uver}{U^{\mathrm{ver}}}
    \newcommand{\Uunv}{U^{\mathrm{unv}}}
\begin{itemize}
\item \textbf{Verifiable costs.} The costs of disclosed actions can be audited by the principal, although they remain noncontractible. Consequently, if the principal anticipates a disclosed action $a\in A$, the agent must exactly execute it. 
This leads to the constraint that the disclosed action set is a subset of the feasible action space:
\[A \subseteq F.\]
When the principal designs contract $w$ and anticipates action $a\in A$, the agent's actual utility is consistent with the anticipated utility, denoted as
\[
U^{\mathrm{ver}}(a,w):=\E_{o\sim q_a}[w(o)] - c_a=\tilde{U}(a;w).
\]
The agent's problem is to strategically select the disclosed action set $A$ to maximize her final utility. With a slight abuse of notation, we denote the agent's final utility induced by disclosed action set $A$ as
\begin{align*}
    \Uver(A):=&\max_{w^*,a^{\mathrm{ant}}\in A}\Uver(a^{\mathrm{ant}},w^*),\\
   \text{s.t. }&w^*\in\arg\max_{w}\mathrm{Rev}(w;A),\\
    &a^{\mathrm{ant}}\in\arg\max_{a\in A}\tilde{U}(a;w^*).
\end{align*}
The agent's problem under verifiable costs can be written as
\begin{align}
\max_{A\subseteq F}\Uver(A).\label{eq:problem-agentUver}
\end{align}

\item \textbf{Unverifiable costs.} The principal cannot verify the agent's actual cost.
Consequently, the agent may associate any non-negative cost $c_a\geq 0$ with an implementable outcome distribution $q_a\in F_q$. When the principal anticipates an action $a=(q_a,c_a)\in A$, the agent is only constrained to exactly deliver the outcome distribution $q_a$. Since costs are unverifiable, she implements $q_a$ at the minimum feasible cost $\realc(q_a)$, regardless of the cost $c_a$ claimed in the disclosure. While $c_a$ strategically influences the principal's contract design, it does not represent the agent's true implementation cost.

Thus, the agent may disclose any action set $A$ with feasible outcome distributions and arbitrary non-negative costs:
\[
A \subseteq F_q\times [0,+\infty).
\]
When the principal designs contract $w$ and anticipates action $a\in A$, the agent's actual utility is
\[
\Uunv(a,w):=\E_{o\sim q_a}[w(o)]-\realc(q_a).
\]
Note that the actual utility $\Uunv(a,w)$ can be different from the anticipated utility $\tilde{U}(a,w)$. The agent's anticipated best-response action maximizes the anticipated utility $\tilde{U}(a,w)$ among disclosed actions, while her choice of the disclosed action set $A$ aims to maximize the actual utility $\Uunv(a,w)$. 

With a slight abuse of notation, we denote the agent's actual utility induced by disclosed action set $A$ as
\begin{align*}
    \Uunv(A):=&\max_{w^*,a^{\mathrm{ant}}\in A}\Uunv(a^{\mathrm{ant}},w^*),\\
    \text{s.t. }&w^*\in\arg\max_{w}\mathrm{Rev}(w;A),\nonumber\\
    &a^{\mathrm{ant}}\in\arg\max_{a\in A}\tilde{U}(a;w^*).\nonumber
\end{align*}
The agent's problem under unverifiable costs is
\begin{align}
    \max_{A\subseteq F_q\times [0,+\infty)}\Uunv(A)\label{eq:problem-agentUunv}.
\end{align}
    
\end{itemize}


\section{Warm-Up: Unverifiable Costs}
\label{sec:unverifiable-cost}
In this section, we analyze the variant model with unverifiable costs, where the agent is able to disclose an action claiming an arbitrary cost but implement its outcome distribution at the minimal feasible cost.

We show that the agent is able to implement the surplus-maximizing action, while extracting the full surplus. 
We denote the maximum surplus among feasible actions as
\[
S^*:=\max_{a \in F} \bigl( r(a) - c_a \bigr).
\]

\begin{theorem}
\label{thm:unverifiable_full_surplus}
Under unverifiable costs, the agent's optimal utility satisfies
\[
\max_{A \subseteq F_q \times [0, +\infty)} U^{\mathrm{unv}}(A) = S^*.
\]
Specifically, let $
a^+ \in \arg\max_{a \in F}( r(a) - c_a )
$
be a surplus-maximizing action in the true feasible set $F$, there exists a disclosed action set $\hat{A} \subseteq F_q \times [0, +\infty)$ such that:
\begin{enumerate}
    \item The principal's optimal contract incentivizes an anticipated action $\hat{a} \in \hat{A}$ with $q_{\hat{a}} = q_{a^+}$;
    \item The agent implements the anticipated outcome distribution $q_{\hat{a}}$ by playing the feasible action $a^+$, bearing cost $c_{a^+}$;
    \item The agent receives expected payment of $r(a^+)$, extracting the entire surplus $S^*=r(a^+)-c_{a^+}$.
\end{enumerate}
\end{theorem}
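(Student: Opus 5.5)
The plan is to prove matching upper and lower bounds. For the upper bound, fix any disclosed set $A$, an optimal contract $w^*$, and an anticipated action $a^{\mathrm{ant}}$. The anticipated action must be a best response, so $\tilde U(a^{\mathrm{ant}};w^*)\ge \tilde U(a_0';w^*)$ for every $a_0'\in A$. More usefully, the principal's revenue at the optimum is nonnegative. The reason is that $w\equiv 0$ is always available, and it yields revenue $\mathrm{Rev}(0;A)\ge 0$, because rewards are nonnegative and the payment is zero for whichever action is anticipated. Hence $\E_{q_{a^{\mathrm{ant}}}}[w^*]\le r(a^{\mathrm{ant}})$. Since $q_{a^{\mathrm{ant}}}\in F_q$ and $(q_{a^{\mathrm{ant}}},\realc(q_{a^{\mathrm{ant}}}))\in F$ (the minimum is attained because $F_B$ is closed, or we take an infimum), we get
\[
\Uunv(a^{\mathrm{ant}},w^*)\le r(a^{\mathrm{ant}})-\realc(q_{a^{\mathrm{ant}}})\le S^*.
\]
Maximizing over the tie-breaking choices of $w^*$ and $a^{\mathrm{ant}}$ preserves this bound, so $\Uunv(A)\le S^*$.

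For the lower bound, construct $\hat A=\{a_0,\hat a\}$ with $a_0$ the null action (cost $0$, and $q_{a_0}\in F_q$) and $\hat a=(q_{a^+},r(a^+))$. The claimed cost $r(a^+)$ is an inflated cost and is allowed, since $q_{a^+}\in F_q$. We then solve the principal's problem on $\hat A$ as follows.
\begin{itemize}
\item Incentivizing $a_0$ yields revenue at most $r(a_0)=0$.
\item Incentivizing $\hat a$ requires $\E_{q_{a^+}}[w]-r(a^+)\ge \E_{q_{a_0}}[w]\ge 0$. This gives payment at least $r(a^+)$ and hence revenue at most $0$. The revenue $0$ is attained, for example, by the contract $w(o)=r(o)$, which gives $\tilde U(\hat a)=0$ and $\tilde U(a_0)=\E_{q_{a_0}}[r]=0$.
\end{itemize}
So the optimal revenue is $0$, and it is achieved both by $w\equiv 0$ (with $a_0$ anticipated) and by $w=r$. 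Under principal-favorable tie-breaking for $a^{\mathrm{ant}}$, the contract $w=r$ makes both actions give revenue $0$. The principal is then indifferent, so we must check the tie-breaking carefully.

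The main obstacle is this tie-breaking. Among revenue-optimal contracts, ties are broken in favor of the agent, and in $\Uunv(A)$ the max is taken over both $w^*$ and $a^{\mathrm{ant}}$. The pair ($w=r$, $a^{\mathrm{ant}}=\hat a$) is admissible: $w=r$ is revenue-optimal, $\hat a$ is in the anticipated argmax, and it attains revenue $0=\mathrm{Rev}(r;\hat A)$. The agent's actual utility under this pair is $r(a^+)-\realc(q_{a^+})\ge r(a^+)-c_{a^+}=S^*$. Combined with the upper bound, this gives equality. In particular $\realc(q_{a^+})=c_{a^+}$, since otherwise a cheaper implementation would exceed $S^*$, which establishes items 1 through 3. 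If one worries that the tie-breaking definition does not let the agent pick $a^{\mathrm{ant}}$, I would fall back on a perturbation. We cannot lower $\hat a$'s claimed cost below $r(a^+)$ without giving the principal positive revenue. Instead, we note that the formal definition of $\Uunv(A)$ takes the max jointly over $(w^*,a^{\mathrm{ant}})$, and this suffices.
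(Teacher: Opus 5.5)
Your proof is correct and follows the paper's argument. You use the same two-action disclosure $\{a_0,(q_{a^+},r(a^+))\}$, show the optimal revenue is zero with agent-favorable tie-breaking selecting $\hat a$, and get the upper bound from nonnegativity of the principal's revenue. Your contract $w=r$ is a minor simplification over the paper's indicator contract on a positive-reward outcome $\hat o$: it removes the need for the support argument and for the separate $S^*=0$ case.
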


\begin{proof}
Assume without loss of generality that $S^*=r(a^+)-c_{a^+}>0$, i.e., the maximal surplus is positive. (Otherwise $S^*=0$, and the agent obtains trivially optimal utility $0$ by selecting the action set $\hat{A}=\{a_0\}$, where $a_0$ is the null action.)

Define action $\hat{a}=(q_{\hat{a}},c_{\hat{a}})\in F_q\times[0,+\infty)$, such that $q_{\hat{a}}=q_{a^+}$ and $c_{\hat{a}}=r(a^+)$.
Under unverifiable costs, the outcome distribution of action $\hat{a}$ can be implemented by $a^+$.
Construct disclosed action set $\hat{A}=\{a_0,\hat{a}\}$, where $a_0$ is the null action.

We show that the principal's optimal revenue is $0$, obtained by incentivizing either action in $\hat{A}$.
One can easily see that the optimal revenue from any contract incentivizing $a_0$ is $0$.
For any contract incentivizing $\hat{a}$, the expected payment is at least $c_{\hat{a}}-c_{a_0}=r(a^+)$.
Since $r(a^+)\geq S^*>0$ and $r(a_0)=0$, there exists some outcome $\hat{o}\in\mathcal{O}$ with $r(\hat{o})>0$, such that $\hat{o}$ is in the support of $q_{a^+}$, but is not in the support of $q_{a_0}$. Therefore, the minimal expected payment to incentivize $\hat{a}$ equals $c_{\hat{a}}-c_{a_0}=r(a^+)$, e.g., by the contract $w(o)=\I[o=\hat{o}]\frac{r(a^+)}{q_{a^+}(\hat{o})}$. Thus, the principal's optimal revenue when incentivizing $\hat{a}$ is also $0$.

Under the assumption that the principal breaks ties in favor of the agent, the principal selects an optimal contract $w^*$ such that $\hat{a}$ is incentivized with expected payment $r(a^+)$.
However, it suffices for the agent to implement the outcome distribution of $\hat{a}$ at the minimum feasible cost by playing $a^+$, obtaining an actual utility of
\[\Uunv(\hat{A})=\Uunv(\hat{a},w^*)=\E_{o\sim q_{\hat{a}}}[w^*(o)]-\realc(q_{\hat{a}})=r(a^+)-c_{a^+}.\]
Here $\realc(q_{a^+})=c_{a^+}$ since $a^+$ achieves the maximal surplus.

Lastly, we verify the optimality of $\Uunv(\hat{A})$. Since the principal's utility is always non-negative, for any disclosed action set $A'$, suppose the principal's optimal contract incentivizes $a'\in A'$, it holds that
$\Uunv(A')\leq r(a')-\realc(q_{a'})$, i.e., the agent's utility is upper-bounded by the social surplus.
It follows that $\max_{A\subseteq F_q\times[0,+\infty)}\Uunv(A)\leq\max_{a\in F}r(a)-c_a=S^*$.
Therefore, we have
\[
\max_{A\subseteq F_q\times[0,+\infty)}\Uunv(A)=r(a^+)-c_{a^+}=S^*.
\]
\qed
\end{proof}
\Cref{thm:unverifiable_full_surplus} shows that the agent extracts full surplus while the principal obtains zero revenue under unverifiable costs. 

The optimal action set is constructed as $\hat{A} = \{a_0, \hat{a}\}$, where $a_0$ is a null action, and $\hat{a}$ has the same outcome distribution as $a^+$ while assigned a highly inflated cost. In this way, the agent shapes the principal's belief that no non-null action can be incentivized unless a high-wage contract is offered. The wage can be increased until zero revenue remains for the principal.



\textbf{Implications for Learning in Repeated Interactions.}
Our model serves as a limit case for a principal's learning process of the revenue-optimal contract through repeated interactions with a single agent. In such settings, the principal typically assumes that the agent is myopic, i.e., always taking a best response action to the offered contract at each round. Our result suggests that a forward-looking agent can significantly benefit from hiding their actual cost efficiency, especially when costs remain unverifiable. By consistently performing as if their costs are inflated during the interaction process, the agent manipulates the principal's eventual belief about the feasible action space, potentially increasing the payment received in the future.




\section{Verifiable Costs with Binary Outcomes}
\label{sec:verifiable-cost}
In this section and the next section, we analyze the variant model with verifiable costs, where the agent is constrained to exactly implement the outcome distribution and cost of any disclosed action $a\in A$ when it is anticipated.
In this section, we characterize the agent's optimal strategy of disclosed action set $A$ under the binary-outcome setting, where the principal's optimal contract is simplified to a linear contract. Our results also hold directly under general outcome spaces when the principal is restricted to linear contracts.


\subsection{Binary-Outcome Assumption}
In this section, we assume a binary outcome space $\mathcal{O}=\{o_0,o_1\}$ with normalized rewards $r(o_0)=0$ and $r(o_1)=1$. 
Under this specification, any action $a = (q_a, c_a)$ is fully characterized by its success probability $y_a := q_a(o_1) = \mathbb{E}_{o \sim q_a}[r(o)] \in [0,1]$ and its cost $c_a$. With a slight abuse of notation, we represent an action as a pair $a = (y_a, c_a)$ where $y_a \in [0,1]$ denotes the expected reward and $c_a \geq 0$ the action cost.

This binary-outcome setting admits an important simplification: for any disclosed action set $A$, the principal's optimal contract always sets zero wage for the null outcome $o_0$ and pays a non-negative amount only upon observing $o_1$. Consequently, the optimal contract is linear and can be represented by a single parameter $\alpha \in [0, 1]$, where the agent receives expected payment $\alpha \cdot y_a$ for implementing action $a$. Formally, the linear contract $w_\alpha$ satisfies $w_\alpha(o_0) = 0$ and $w_\alpha(o_1) = \alpha$, yielding $\mathbb{E}_{o \sim q_a}[w_\alpha(o)] = \alpha\cdot y_a$.

Our analysis for this binary-outcome setting extends directly to environments with general outcome spaces when the principal is restricted to linear contracts (i.e., $w(o) = \alpha \cdot r(o)$ for some fixed $\alpha \geq 0$). In both cases, the contract space reduces to a one-dimensional choice of $\alpha$, enabling a tractable characterization of the agent's optimal disclosed action set.

\subsection{Cost Curve Representation}
We firstly show that the agent's selection of disclosed action set $A$ is equivalent to designing a cost curve $c:[0,1]\to[0,+\infty]$.

\newcommand{\lowconv}{\mathop{\mathrm{lowerConv}}}
Given a two-dimensional point set $\Gamma$, we define its lower convex hull as a function 
\[\lowconv[\Gamma](x):=\begin{cases}
    \inf\{z:(x,z)\in\conv(\Gamma)\},& \exists z, (x,z)\in\conv(\Gamma),\\
    +\infty, &\text{otherwise.}
\end{cases}\]

Under verifiable costs and binary outcome space, any disclosed action set $A$ can be represented as a two-dimensional point set $\{a=(y_a,c_a)\}$ of reward-cost pairs. Since the principal only uses linear contracts, only the points on the lower convex hull of $A$ may be incentivized, and all other points have no influence on contract design. Therefore, a disclosed action set $A$ can be represented by its lower convex hull, which can be written as a cost curve for action reward $y\in[0,1]$:
\[
c(y):=\lowconv[A](y).
\]

\newcommand{\UpY}{\bar{Y}}
With a slight abuse of notation, we similarly view the feasible base action set $F_B$ as a two-dimensional point set of reward-cost pairs.
The minimal feasible cost for any action reward $y\in[0,1]$ can be written as
\[\realc(y):=\lowconv[F_B](y).\]
Specifically, the existence of null action $a_0\in F_B$ implies $\realc(0)=0$, which further implies that $\realc(y)$ is non-decreasing. Denote the maximum feasible action reward by $\UpY:=\max_{a\in F_B}y_a$,
then $\realc(y)<+\infty$ if and only if $y\leq\UpY$, indicating the feasibility.

The agent's selection of disclosed action set $A$ can be viewed as designing a convex cost curve $c:[0,1]\to[0,+\infty]$, and the constraint on disclosed action set that $A\subseteq F$ is rewritten as
\[
c(y)\geq \realc(y),\quad \forall y\in[0,1].
\]
Given the cost curve $c(y)$ and any linear contract $w_\alpha$ offered by the principal, the anticipated best-response action $(y^*,c(y^*))$ is given by
\begin{align*}
    y^*\in\arg\max_{y\in[0,1]}\alpha y-c(y),
\end{align*}
Ties are broken in favor of the principal, i.e., the largest optimal $y^*$ is taken.

We further analyze the induced optimal contract given the cost curve $c(y)$.
Since $\alpha$ is non-negative, it is without loss of generality to assume that $c(y)$ is non-decreasing on $[0,1]$. (Otherwise, there exists some $\hat{y}\in[0,1]$ such that $c(y)$ is non-increasing on $[0,\hat{y}]$ and non-decreasing on $[\hat{y},1]$, and setting $c(y)=c(\hat{y})$ for all $y\in[0,\hat{y}]$ makes the cost curve non-decreasing without influencing contract design.)

Define $c'_-(y)$ as the left-derivative of $c$ at $y\in(0,1]$. Specifically, we define $c'_-(0)=0$, and $c'_-(y)=+\infty$ when $c(y)=+\infty$. Then $c'_-(y)$ is non-decreasing and left-continuous on $[0,1]$. For any linear contract $w_\alpha$ with $\alpha\in[0,1]$, the anticipated best-response action is given by
\[y^*=\sup\{y\in[0,1]:c'_-(y)\leq \alpha\}.\]
By the left-continuity of $c'_-(y)$, it holds that $c'_-(y^*)\leq \alpha$. 
Moreover, in the optimal contract, it holds without loss of generality that $\alpha=c'_-(y^*)$: When $\alpha>c'_-(y^*)$, lowering $\alpha$ to $c'_-(y^*)$ weakly increases the principal's revenue, while $y^*$ is still incentivized.

Therefore, the principal's problem becomes
\begin{align*}
    \max_{y^*\in[0,\UpY]}(1-c'_-(y^*))y^*,
\end{align*}
where the principal selects the optimal solution of $y^*$, taking the maximum one if there are multiple optimal solutions. The optimal contract sets $\alpha=c'_-(y^*)$ and anticipates action $(y^*,c(y^*))$. $y^*$ can be restricted in $[0,\UpY]$ due to feasibility.

Now we can rewrite the agent's problem \eqref{eq:problem-agentUver} as the following optimization over the cost curve:
\begin{align}
\max_{A\subseteq F}\Uver(A)=&\max_{c:[0,1]\to [0,+\infty],y^*\in[0,\UpY]}c'_-(y^*)y^*-c(y^*)\label{problem:cost-curve}\\
\text{s.t. }&(1-c'_-(y^*))y^*\geq (1-c'_-(y))y,&\forall y\in[0,y^*),\nonumber\\
&(1-c'_-(y^*))y^*>(1-c'_-(y))y,&\forall y\in(y^*,1],\nonumber\\
&c(y)\geq\realc(y),&\forall y\in[0,1],\nonumber\\
&c(y)\text{ is non-decreasing and convex on }[0,1].\nonumber
\end{align}
Here $y^*$ is an auxiliary variable representing the induced action reward. The first and second lines of constraints ensure that $w_{c'_-(y^*)}$ is the principal's optimal linear contract and $y^*$ is incentivized. The third line of constraints enforces the feasibility of implementation, i.e., $A\subseteq F$. The last constraint follows from our characterization above.

\subsection{Optimal Cost Curve}
We now characterize the agent's optimal disclosed action set by solving the cost curve optimization problem \eqref{problem:cost-curve}. Since the optimal solution depends on $\realc(\cdot)$, there is generally no closed-form solution. However, we can reduce the problem to a two-variable convex optimization, which leads to an additive FPTAS. 

Our characterization consists of two main steps. First, we fix a target contract parameter $\alpha^*\in[0,1]$ and an anticipated action reward $y^*\in[0,1]$, and characterize the optimal cost curve for the agent to induce $\alpha^*$ and $y^*$ with maximal utility, which admits a closed-form solution. This reduces the infinite-dimensional optimization to a two-variable problem over $(\alpha^*,y^*)$.
Second, we reparameterize the problem using the principal's revenue $R^*:=(1-\alpha^*)y^*$ instead of $\alpha^*$. This converts the objective into a jointly concave function in $(R^*,y^*)$, reducing the problem to a convex optimization.


\newcommand{\reals}{\bar{s}}



\subsubsection{Optimizing Cost Curve for Fixed Contract and Reward}

Given any contract parameter $\alpha^*\in[0,1]$ and an anticipated action reward $y^*\in[0,\UpY]$, we analyze the problem \eqref{problem:cost-curve} with the additional constraint that the agent's design of cost curve induces the principal's selection of contract $w_{\alpha^*}$, with incentivized action reward $y^*$. 
Given these conditions, the agent's utility is $\alpha^*y^*-c(y^*)$. Since the first term $\alpha^*y^*$ is fixed, the optimal cost curve minimizes the cost $c(y^*)$ under the constraints.

Observe that we can simply set $c(y)=+\infty$ for $y\in(y^*,1]$, i.e., let $y^*$ be the maximal reward among disclosed actions. Assuming this, we can focus on the optimization of $c(\cdot)$ on $y\in[0,y^*]$, which we write as the following problem:
\begin{align}
\mathrm{MinCost}(\alpha^*,y^*):=&\inf_{c(\cdot)}c(y^*)\label{problem:min-cost-for-fixed-outcomes}\\
\text{s.t. }
&c'_-(y^*)=\alpha^*,\nonumber\\
&(1-\alpha^*)y^*\geq (1-c'_-(y))y,&\forall y\in[0,y^*),\nonumber\\
&c(y)\geq\realc(y),&\forall y\in[0,y^*],\nonumber\\
&c(y)\text{ is non-decreasing and convex on }[0,y^*].\nonumber
\end{align}
Then we can rewrite the agent's problem \eqref{problem:cost-curve} as
\begin{align}
    \max_{A\subseteq F}\Uver(A)=\max_{\alpha^*\in[0,1],y^*\in[0,\UpY]}\alpha^*y^*-\mathrm{MinCost}(\alpha^*,y^*).\label{eq:Uver-with-MinCost}
\end{align}

Below we characterize the optimal solution of $\mathrm{MinCost}(\alpha^*,y^*)$, and obtain a closed-form solution.
\begin{theorem}
\label{thm:min-cost-optimal-curve}
For any $\alpha^*\in[0,1]$ and $y^*\in[0,\UpY]$, it holds that
    \begin{align*}
        \mathrm{MinCost}(\alpha^*,y^*)&=\max_{y\in[0,y^*]}(\realc(y)+\int_{y}^{y^*}(1-\frac{(1-\alpha^*)y^*}{t})dt)\\
        &=\max_{y\in[0,y^*]}(\realc(y)+y^*-y-(1-\alpha^*)y^*(\ln y^*-\ln y)).
    \end{align*}
Moreover, the infimum in problem \eqref{problem:min-cost-for-fixed-outcomes} is attainable.

\end{theorem}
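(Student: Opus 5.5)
The proof establishes matching lower and upper bounds, with $R^*:=(1-\alpha^*)y^*$ denoting the principal's revenue. The key observation is that the revenue constraint $(1-\alpha^*)y^*\geq(1-c'_-(y))y$ is equivalent to a lower bound on the slope, namely $c'_-(y)\geq 1-R^*/y$ for $y\in(0,y^*)$.

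\textbf{Lower bound.} Let $c$ be any feasible curve and fix $y\in(0,y^*]$. Since $c$ is convex and finite on the relevant range, it is absolutely continuous there, with derivative equal to $c'_-$ almost everywhere. Hence
\[
c(y^*)=c(y)+\int_y^{y^*}c'_-(t)\,dt\geq \realc(y)+\int_y^{y^*}\Bigl(1-\frac{R^*}{t}\Bigr)dt .
\]
The case $y=0$ is handled in one of two ways. If $R^*>0$, the expression tends to $-\infty$, so $y=0$ never attains the maximum. If $R^*=0$, take the limit as $y\to0$, using continuity of $\realc$ at $0$ from the right, or simply use $c(0)\geq 0=\realc(0)$. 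Taking the maximum over $y$ gives $\mathrm{MinCost}(\alpha^*,y^*)\geq M$, where $M$ denotes the right-hand side of the theorem.

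Two remarks about this step:
\begin{itemize}
\item The integral bound uses only the pointwise slope constraint. Monotonicity of $c$ gives $c'_-\geq 0$, which is a stronger slope bound on $t<R^*$, but we do not need it for the lower bound.
\item That maximum is attained because $\realc$ is lower semicontinuous (it is the lower convex hull of a closed set, or $+\infty$ outside the domain) and the integral term is continuous on $(0,y^*]$. I would also check that $\realc(y)\leq \realc(y^*)<\infty$ for $y\leq y^*\leq\UpY$, so $M$ is finite.
\end{itemize}

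\textbf{Construction (upper bound).} Take the minimal curve satisfying all the constraints from the top down. Set $c(y^*):=M$ and
\[
c(y):=\max\Bigl\{\,M-\int_y^{y^*}\max\bigl(0,1-R^*/t\bigr)\,dt,\;0\Bigr\},\qquad y\in[0,y^*],
\]
with $c(y):=+\infty$ for $y>y^*$. The integrand $\max(0,1-R^*/t)$ is nondecreasing, so this $c$ is convex and nondecreasing.

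The following items must be verified:
\begin{enumerate}
\item \emph{Slope at $y^*$.} The left derivative at $y^*$ is $1-R^*/y^*=\alpha^*$, as required.
\item \emph{Slope constraint.} Wherever $c>0$, the left derivative is $\max(0,1-R^*/y)$, which is at least $1-R^*/y$. Where $c$ is clipped to $0$, the slope is $0$. Clipping only happens at $y$ for which the cumulative integral exceeds $M$. In that region we need $1-R^*/y\leq 0$, i.e.\ $y\leq R^*$, and this must be checked. I expect it holds because $M\geq \realc(R^*)+\int_{R^*}^{y^*}(1-R^*/t)\,dt\geq\int_{R^*}^{y^*}(\cdots)\,dt$, so the curve does not reach $0$ before $y$ drops to $R^*$.
\item \emph{Feasibility $c\geq\realc$.} For $y\geq R^*$, this follows from the definition of $M$ as a maximum. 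For $y<R^*$, the curve is flat at value $c(R^*)\geq\realc(R^*)\geq\realc(y)$, using that $\realc$ is nondecreasing.
\end{enumerate}
The expression in the theorem counts the integral with a negative integrand below $R^*$. Consequently the maximizing $y$ is automatically at least $R^*$ whenever that matters, because moving $y$ up to $R^*$ raises $\realc$ and raises the integral term. So the two formulas agree, and the curve attains $c(y^*)=M$. This gives both the value of $\mathrm{MinCost}$ and attainability.

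The second equality in the theorem is just evaluation of the integral.

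\textbf{Main obstacle.} I expect the main difficulty to be the interplay between the monotonicity constraint and the negative integrand below $R^*$, together with the edge cases $y=0$, $R^*=0$, and $\alpha^*=1$. Care is needed to show that the maximum in the formula is unaffected by the flat region.
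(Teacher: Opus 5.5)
Your proof is correct and follows the paper's route. The lower bound integrates $c'_-(t)\geq 1-(1-\alpha^*)y^*/t$ upward from a point where $c\geq\realc$. Your top-down curve $M-\int_y^{y^*}\max\{0,1-R^*/t\}\,dt$ coincides with the paper's $\realc(y^\dagger)+\int_{y^\dagger}^{y}\max\{0,g_{\alpha^*,y^*}(t)\}\,dt$. The clip at $0$ never binds, because the curve stays above $\realc(R^*)\geq 0$. Building it from $M$ has the small advantage that you never need to pick a maximizer $y^\dagger$ and prove $y^\dagger\geq R^*$.

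One slip: lower semicontinuity gives attainment of minima, not maxima. The maximum in the formula is attained because $\realc$ is finite, convex and nondecreasing on $[0,y^*]$, hence upper semicontinuous there; in particular $\lim_{y\to y^{*-}}\realc(y)\leq\realc(y^*)$. In any case, your construction works verbatim with $M$ defined as a supremum.
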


\begin{proof}

Firstly, define $g_{\alpha^*,y^*}(y):=1-\frac{(1-\alpha^*)y^*}{y}$, we prove a lower bound that $\mathrm{MinCost}(\alpha^*,y^*)\geq \max_{y\in[0,y^*]}\realc(y)+\int_{y}^{y^*}g_{\alpha^*,y^*}(t)dt$.
Observe that the second constraint in problem \eqref{problem:min-cost-for-fixed-outcomes} implies a lower bound for the left-derivatives:
\[c'_-(y)\geq 1-\frac{(1-\alpha^*)y^*}{y}=g_{\alpha^*,y^*}(y),\]
for $y\in[0,y^*)$. 


For any valid solution $c(\cdot)$ of problem \eqref{problem:min-cost-for-fixed-outcomes}, for any $y\in(0,y^*)$, we have 
\[c(y^*)=c(y)+\int_y^{y^*}c'_-(t)dt\geq \realc(y)+\int_y^{y^*}g_{\alpha^*,y^*}(t)dt,\]
where $c(y)\geq \realc(y)$ holds by the third constraint in \eqref{problem:min-cost-for-fixed-outcomes}.

Since this holds for all $y\in[0,y^*]$ and all feasible solutions $c(\cdot)$, we obtain
\begin{align*}
    \mathrm{MinCost}(\alpha^*,y^*)&\geq \max_{y\in[0,y^*]}\realc(y)+\int_y^{y^*}g_{\alpha^*,y^*}(t)dt\\
    &=\max_{y\in[0,y^*]}\realc(y)+\int_{y}^{y^*}\Big(1-\frac{(1-\alpha^*)y^*}{t}\Big)dt\\
    &=\max_{y\in[0,y^*]}\realc(y)+y^*-y+(1-\alpha^*)y^*(\ln y-\ln y^*).
\end{align*}

Secondly, we prove that this lower bound is achievable, by constructing a valid cost curve $c(\cdot)$ with $c(y^*)=\max_{y\in[0,y^*]}\realc(y)+\int_y^{y^*}g_{\alpha^*,y^*}(t)dt$.


Pick $y^\dagger \in \arg\max_{y \in [0, y^*]} \realc(y) + \int_y^{y^*} g_{\alpha^*,y^*}(t)\,dt$, breaking ties arbitrarily.
We construct the cost curve on $y\in[0,y^*]$ as
\[
c(y):=\realc(y^\dagger)+\int_{y^\dagger}^{y}\max\{0,g_{\alpha^*,y^*}(t)\}dt.
\]

Note that $g_{\alpha^*,y^*}(y)$ is non-decreasing in $y$, and that $g_{\alpha^*,y^*}(y)\geq0$ if and only if $y\geq(1-\alpha^*)y^*$.
Observe that we have either $y^\dagger=y^*$, or $g_{\alpha^*,y^*}(y^\dagger)\geq 0$, because otherwise $\realc(y) + \int_y^{y^*} g_{\alpha^*,y^*}(t)\,dt$ is strictly increasing at $y=y^\dagger$, which contradicts the definition of $y^\dagger$. It follows that $y^\dagger\geq (1-\alpha^*)y^*$, and that $c(y^*)=\realc(y^\dagger)+\int_{y^\dagger}^{y^*}g_{\alpha^*,y^*}(t)dt$. That is, the lower bound is achieved by $c(\cdot)$.

It remains to verify that $c(\cdot)$ satisfies the four constraints in problem \eqref{problem:min-cost-for-fixed-outcomes}.

The first constraint $c'_-(y^*)=\alpha^*$ holds since we have $c'_-(y)=g_{\alpha^*,y^*}(y)$ for all $y\geq (1-\alpha^*)y^*$, so $c'_-(y^*)=g_{\alpha^*,y^*}(y^*)=\alpha^*$.

The second constraint $(1-\alpha^*)y^*\geq (1-c'_-(y))y,~\forall y\in[0,y^*)$ holds, as we have $c'_-(y)=\max\{0,g_{\alpha^*,y^*}(y)\}\geq g_{\alpha^*,y^*}(y)$ for all $y\in[0,y^*]$, and the second constraint is implied.

Next we verify the third constraint $c(y)\geq \realc(y),~\forall y\in[0,y^*]$.
For any $y\in[(1-\alpha^*)y^*,y^*]$, we have $c(y)=\realc(y^\dagger)+\int_{y^\dagger}^{y}g_{\alpha^*,y^*}(t)dt$. Observe that
\[c(y)-\realc(y)=\left(\realc(y^\dagger)+\int_{y^\dagger}^{y^*}g_{\alpha^*,y^*}(t)dt\right)-\left(\realc(y)+\int_{y}^{y^*}g_{\alpha^*,y^*}(t)dt\right)\geq 0,\]
where the inequality is by definition of $y^\dagger$. So $c(y)\geq\realc(y)$ for all $y\in[(1-\alpha^*)y^*,y^*]$. For any $y\in[0,(1-\alpha^*)y^*)$, we have
\[c(y)=c((1-\alpha^*)y^*)\geq \realc((1-\alpha^*)y^*)\geq \realc(y),\]
where the second inequality is because $\realc(y)$ is non-decreasing.

The fourth constraint that $c(y)$  is non-decreasing and convex on $[0,y^*]$ holds as $c'(y)=\max\{0,g_{\alpha^*,y^*}(y)\}$ is non-negative and non-decreasing.

In summary, $c(\cdot)$ is a valid solution for \eqref{problem:min-cost-for-fixed-outcomes}, and achieves $c(y^*)=\max_{y\in[0,y^*]}\realc(y)+\int_y^{y^*}g_{\alpha^*,y^*}(t)dt$. It follows that
\[\mathrm{MinCost}(\alpha^*,y^*)=\max_{y\in[0,y^*]}\realc(y)+\int_y^{y^*}g_{\alpha^*,y^*}(t)dt=\max_{y\in[0,y^*]}\realc(y)+\int_{y}^{y^*}\Big(1-\frac{(1-\alpha^*)y^*}{t}\Big)dt.\]
This completes the proof.
\qed

\end{proof}


\begin{figure}[ht]
\centering
\begin{tikzpicture}
\begin{axis}[
    width=9.9cm,
    height=5.4cm,
    axis lines=middle,
    xlabel={$y$},
    ylabel={},
    xmin=0, xmax=1.1,
    ymin=0, ymax=0.6,
    domain=0:1,
    samples=200,
    clip=false, 
    grid=none,
    tick label style={font=\small},
    xtick={0, \yflat,\ydag, \ystar, 1},
    xticklabels={$0$,$(1-\alpha^*)y^*$, $y^\dagger$, $y^*$, $1$},
    ytick={0,0.5},
    yticklabels={$0$,$0.5$},
    label style={font=\small},
    legend style={font=\small, at={(0.98,0.98)}, anchor=north east}
]

\def\ystar{0.8}
\def\alphastar{0.5}
\def\slopeconst{0.4} 
\def\ydag{0.6}
\def\cdag{0.2}       
\def\yflat{0.4}      
\def\cflat{0.162186043243}      
\def\cstar{0.284927171019}      

\addplot [
    thick,
    blue,
    domain=0:1
] {1/3*x};
\addlegendentry{$\underline{c}(y) = \frac{1}{3} y$};

\addplot [
    ultra thick,
    red,
    domain=0:\yflat
] {\cflat};

\addplot [
    ultra thick,
    red,
    domain=\yflat:\ystar
] {\cdag + (x - \ydag) - \slopeconst*(ln(x) - ln(\ydag))};
\addlegendentry{$c(y)$};

\draw[dashed, gray] 
    (axis cs:\ystar,0) -- (axis cs:\ystar,\cstar) 
    node[circle, fill=black, inner sep=1.5pt]{};
\draw[dashed, gray] 
    (axis cs:\ydag,0) -- (axis cs:\ydag,\cdag) 
    node[circle, fill=black, inner sep=1.5pt]{};
\draw[dashed, gray] 
    (axis cs:\yflat,0) -- (axis cs:\yflat,\cflat) 
    node[circle, fill=black, inner sep=1.5pt]{};


\draw[help lines, shorten >=2pt, shorten <=2pt] 
    (axis cs:{\ystar-0.1},{\cstar-0.1*\alphastar}) -- 
    (axis cs:{\ystar+0.1},{\cstar+0.1*\alphastar});
\node[right, font=\small] at (axis cs:\ystar,\cstar+0.08) {$c'_-(y^*)=\alpha^*=0.5$};


\end{axis}
\end{tikzpicture}
\caption{The construction of optimal disclosed cost curve $c(y)$ for $\alpha^*=0.5$, $y^*=0.8$.}
\label{fig:cost_curve_example}
\end{figure}

\begin{figure}[ht]
\centering
\begin{tikzpicture}
\begin{axis}[
    width=9.9cm,
    height=5.4cm,
    axis lines=middle,
    xlabel={$y$},
    ylabel={},
    xmin=0, xmax=1.1,
    ymin=0, ymax=0.6,
    domain=0:1,
    samples=200,
    clip=false, 
    grid=none,
    tick label style={font=\small},
    xtick={0,0, \yflat, \ystar, 1},
    xticklabels={$0$,$0$,$(1-\alpha^*)y^*$, $y^\dagger=y^*$, $1$},
    ytick={0,0,0.5},
    yticklabels={$0$,$0$,$0.5$},
    label style={font=\small},
    legend style={font=\small, at={(0.98,0.98)}, anchor=north east}
]

\def\ystar{0.8}
\def\alphastar{0.2}
\def\slopeconst{0.64} 
\def\ydag{0.8}
\def\cdag{0.266666666667}       
\def\yflat{0.64}      
\def\cflat{0.249478539508}      
\def\cstar{0.266666666667}      

\addplot [
    thick,
    blue,
    domain=0:1
] {1/3*x};
\addlegendentry{$\underline{c}(y) = \frac{1}{3} y$};

\addplot [
    ultra thick,
    red,
    domain=0:\yflat
] {\cflat};

\addplot [
    ultra thick,
    red,
    domain=\yflat:\ystar
] {\cdag + (x - \ydag) - \slopeconst*(ln(x) - ln(\ydag))};
\addlegendentry{$c(y)$};

\draw[dashed, gray] 
    (axis cs:\ystar,0) -- (axis cs:\ystar,\cstar) 
    node[circle, fill=black, inner sep=1.5pt]{};
\draw[dashed, gray] 
    (axis cs:\yflat,0) -- (axis cs:\yflat,\cflat) 
    node[circle, fill=black, inner sep=1.5pt]{};


\draw[help lines, shorten >=2pt, shorten <=2pt] 
    (axis cs:{\ystar-0.1},{\cstar-0.1*\alphastar}) -- 
    (axis cs:{\ystar+0.1},{\cstar+0.1*\alphastar});
\node[right, font=\small] at (axis cs:\ystar,\cstar-0.03) {$c'_-(y^*)=\alpha^*=0.2$};


\end{axis}
\end{tikzpicture}
\caption{The construction of optimal disclosed cost curve $c(y)$ for $\alpha^*=0.2$, $y^*=0.8$.}
\label{fig:cost_curve_example2}
\end{figure}
\begin{example}
Consider an example where the minimum feasible cost is $\realc(y)=\frac13 y$.
For $\alpha^*=0.5$ and $y^*=0.8$, according to \Cref{thm:min-cost-optimal-curve}, the optimal solution of $\mathrm{MinCost}(\alpha^*,y^*)$ is constructed as
$c(y)=\realc(y^\dagger)+\int_{y^\dagger}^{y}\max\left\{0,1-\frac{(1-\alpha^*)y^*}{t}\right\}dt$
for $y\in[0,y^*]$, where $y^\dagger=0.6$, as illustrated in \Cref{fig:cost_curve_example}. Recall that $c(y)=+\infty$ on the interval $(y^*,1]$, which is not illustrated.
The cost curve $c(y)$ is always on or above the minimum feasible cost curve $\realc(y)$, and the two curves are tangent at $y^\dagger=0.6$, where $c'(y^\dagger)=g_{\alpha^*,y^*}(y^\dagger)=1-\frac{(1-\alpha^*)y^*}{y^\dagger}=\frac13=\realc'(y^\dagger)$.

Consider another case with $\alpha^*=0.2$ and $y^*=0.8$. The constructed optimal solution of $\mathrm{MinCost}(\alpha^*,y^*)$ is illustrated in \Cref{fig:cost_curve_example2}. Recall that $c(y)=+\infty$ on the interval $(y^*,1]$, which is not illustrated. The difference from the former case is that the two curves are not tangent at the intersecting point $y^\dagger=y^*=0.8$.
\end{example}

\subsubsection{Reparameterization and Surplus Curve Representation}
By \Cref{thm:min-cost-optimal-curve} and equation \eqref{eq:Uver-with-MinCost}, now we have
\begin{align*}
    \max_{A\subseteq F}\Uver(A)&=\max_{\alpha^*\in[0,1],y^*\in[0,\UpY]}\alpha^*y^*-\mathrm{MinCost}(\alpha^*,y^*)\\
    &=\max_{\alpha^*\in[0,1],y^*\in[0,\UpY]}\alpha^*y^*-\max_{y\in[0,y^*]}(\realc(y)+y^*-y-(1-\alpha^*)y^*(\ln y^*-\ln y)).
\end{align*}

To simplify the optimization structure, we reparameterize the agent's problem with the principal's revenue 
\[R^*=(1-\alpha^*)y^*,\]
instead of the contract parameter $\alpha^*$.

We also define the feasible surplus curve
\[\reals(y)=y-\realc(y),\quad y\in[0,1],\]
which represents the social surplus obtained from implementing action reward $y$ with minimum feasible cost. We have $\reals(0)=0-\realc(0)=0$, and $\reals(y)$ is concave in $y\in[0,1]$ by the convexity of $\realc(y)$.

We rewrite the agent's optimization of utility as
\begin{align*}
&\max_{\alpha^*\in[0,1],y^*\in[0,\UpY]}\alpha^*y^*-\max_{y\in[0,y^*]}(\realc(y)+y^*-y-(1-\alpha^*)y^*(\ln y^*-\ln y))\\
=&\max_{y^*\in[0,\UpY],R^*\in[0,y^*]}y^*-R^*-\max_{y\in[0,y^*]}(\realc(y)+y^*-y-R^*(\ln y^*-\ln y))\\
=&\max_{y^*\in[0,\UpY],R^*\in[0,y^*]}\min_{y\in[0,y^*]}y-\realc(y)+R^*(\ln y^*-\ln y-1)\\
=&\max_{y^*\in[0,\UpY],R^*\in[0,y^*]}\min_{y\in[0,y^*]}\reals(y)+R^*(\ln y^*-\ln y-1).
\end{align*}
For technical convenience, we relax the constraint of $R^*\in[0, y^*]$ to $R^*\in[0,\UpY]$. This does not influence the optimal solution, as stated in the following lemma.
\begin{lemma}
\label{lem:reparameter-R}
$\max_{A\subseteq F}\Uver(A)=\max_{y^*\in[0,\UpY],R^*\in[0,\UpY]}\min_{y\in[0,y^*]}\reals(y)+R^*(\ln y^*-\ln y-1)$.
\end{lemma}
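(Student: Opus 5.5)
The relaxed maximization contains every pair from the original one, since $[0,y^*]\subseteq[0,\UpY]$. The plan is therefore to show that every newly admitted pair, namely those with $R^*\in(y^*,\UpY]$, has objective value strictly below a value already attained with $R^*\in[0,y^*]$. The relaxed and original maxima then coincide, and the chain of equalities derived just before the lemma gives the claim.

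Write $\Phi(R^*,y^*):=\min_{y\in[0,y^*]}\reals(y)+R^*(\ln y^*-\ln y-1)$. The first step is a baseline: $\Phi(0,0)=\reals(0)=0$, and the pair $(R^*,y^*)=(0,0)$ is admissible in the original problem. So the original maximum is at least $0$, which also reflects the fact that the agent can always secure utility $0$.

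The second step is an upper bound for any relaxed pair with $R^*>y^*$. I would plug $y=y^*$ into the inner minimization. This choice is always allowed and kills the logarithmic term, giving
\[
\Phi(R^*,y^*)\le \reals(y^*)-R^*=y^*-\realc(y^*)-R^*<-\realc(y^*)\le 0 .
\]
Hence such pairs are strictly worse than the baseline $(0,0)$, and dropping them leaves the maximum unchanged.

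The main obstacle is bookkeeping at the boundary $y=0$ or $y^*=0$, where $\ln$ is undefined. I would fix the convention used in the earlier derivation: the term $R^*(\ln y^*-\ln y)$ is read as $\int_y^{y^*}\frac{R^*}{t}\,dt$. This equals $0$ when $y=y^*$, including $y=y^*=0$, and equals $+\infty$ when $y=0<y^*$ and $R^*>0$. With this reading, evaluating at $y=y^*$ is always legitimate, so the bound above also covers $y^*=0<R^*$, where it gives $\Phi\le -R^*<0$. Likewise $\Phi(0,0)=0$ is well defined.

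Finally, I would note that the maximum is attained. For this I would cite the attainability part of \Cref{thm:min-cost-optimal-curve}, together with upper semicontinuity of $\Phi$ as a minimum of continuous functions. If attainment is delicate, the argument works equally well with suprema.
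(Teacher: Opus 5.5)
Your proof is correct, but it is organized differently from the paper's.

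The paper argues slice by slice. For fixed $y^*$ and $R'\in[y^*,\UpY]$, the function $y+R'(\ln y^*-\ln y-1)$ is non-increasing on $(0,y^*]$, since its derivative is $1-R'/y\le 0$. Also $-\realc$ is non-increasing. So the inner minimum is attained exactly at $y=y^*$ and equals $\reals(y^*)-R'$. This is decreasing in $R'$, so on each slice the best $R'\ge y^*$ is $R'=y^*$, which is already admissible in the original problem.

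You instead use only the one-sided bound from plugging in $y=y^*$, and compare against a global baseline. Every new pair has value below $-\realc(y^*)\le 0$, whereas the original problem attains $0$. This holds at $(0,0)$ under your integral convention. It also holds simply because disclosing $\{a_0\}$ yields $0$, and the chain of equalities preceding the lemma identifies the original value with $\max_{A\subseteq F}\Uver(A)$. Your route is shorter and uses neither the exact minimizer nor monotonicity of $\realc$, only $\realc\ge 0$.

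What your route gives up is the slice-wise statement: that $\max_{R\in[0,\UpY]}$ and $\max_{R\in[0,y^*]}$ agree for every fixed $y^*$, with the closed form $\reals(y^*)-R$ for $R\ge y^*$. The paper reuses this monotonicity in the proof of \Cref{thm:fix-y*-lowerconvex-opt} to handle a subgradient $R^\dagger>\UpY$. With your version, that step would need its own (easy) argument.

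Your attainment paragraph is unnecessary and slightly misdirected. The attainability in \Cref{thm:min-cost-optimal-curve} concerns the inner problem $\mathrm{MinCost}$, not the outer maximum. In any case no semicontinuity is needed: the two values coincide, and any maximizer of the original problem is feasible for the relaxed one.
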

The reparameterization makes the optimization objective jointly concave in $y^*$ and $R^*$, enabling us to reduce the agent's problem to a two-variable convex optimization.
\begin{theorem}\label{thm:tractable-convex-program}
$\Phi(y^*,R^*):=\min_{y\in[0,y^*]}\reals(y)+R^*(\ln y^*-\ln y-1)$ is jointly concave in $y^*\in[0,\UpY]$ and $R^*\in[0,\UpY]$. Consequently, given oracle access to $\reals(\cdot)$, the agent's problem $\max_{y^*\in[0,\UpY],R^*\in[0,\UpY]}\min_{y\in[0,y^*]}\reals(y)+R^*(\ln y^*-\ln y-1)$ can be solved within additive error $\epsilon$ in polynomial time in $\frac1\epsilon$ for any $\epsilon>0$. 
\end{theorem}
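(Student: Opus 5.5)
The plan is to prove joint concavity by writing $\Phi$ as a pointwise infimum of jointly concave functions over a domain that does not depend on $(y^*,R^*)$. A direct approach fails because the inner variable ranges over $[0,y^*]$, which depends on $y^*$. So I substitute $y=\lambda y^*$ with $\lambda\in[0,1]$. This gives
\[
\Phi(y^*,R^*)=\min_{\lambda\in[0,1]}\Big(\reals(\lambda y^*)-R^*(\ln\lambda+1)\Big),
\]
with the convention that the term is $+\infty$ at $\lambda=0$ when $R^*>0$. At $\lambda=0$ or $R^*=0$ the expression reduces to $\reals(0)=0$ or to $\reals(\lambda y^*)$, and these boundary cases need a routine check.

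For each fixed $\lambda\in(0,1]$, the function $(y^*,R^*)\mapsto\reals(\lambda y^*)-R^*(\ln\lambda+1)$ has two parts:
\begin{itemize}
\item $\reals(\lambda y^*)$ is concave in $y^*$, since it is a concave function of a linear map; recall $\reals$ is concave because $\realc$ is convex.
\item $-R^*(\ln\lambda+1)$ is linear in $R^*$.
\end{itemize}
Hence each such function is jointly concave in $(y^*,R^*)$. A pointwise infimum of concave functions is concave, so $\Phi$ is jointly concave on $[0,\UpY]^2$. I will briefly justify that the minimum is attained, or else work with infima throughout, which is harmless for concavity. At $y^*=0$ the only admissible $y$ is $0$; I need to confirm that the convention $\ln 0$ there matches the $\lambda$-form so that no discontinuity breaks concavity at the boundary.

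For the algorithmic part, $\Phi$ is concave on a bounded two-dimensional box, so maximizing it within additive error $\epsilon$ is a convex optimization problem. The main obstacle is evaluating $\Phi$ accurately with only oracle access to $\reals$, since the inner minimization is one-dimensional but not obviously concave in $\lambda$.
\begin{itemize}
\item \textbf{Inner problem.} In the variable $y$ itself, $\reals(y)-R^*\ln y$ is convex minus concave, which is not directly convex. Instead I will use that $\reals$ is concave and $1$-Lipschitz where finite: it has slope $1-\realc'\le 1$, and its slope is bounded below on the relevant range, or I can restrict to $y\ge$ some cutoff. Then I evaluate the inner minimum on a grid in $y$ of mesh $\mathrm{poly}(\epsilon)$. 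The term $R^*\ln y$ is Lipschitz away from $0$, and near $0$ the objective blows up to $+\infty$ whenever $R^*>0$, so small $y$ can be excluded.
\item \textbf{Outer problem.} I take a grid over $(y^*,R^*)$ of mesh $\mathrm{poly}(\epsilon)$ and use concavity to bound the discretization error, or alternatively run a cutting-plane or ellipsoid method with approximate function values.
\end{itemize}
The total running time is polynomial in $1/\epsilon$. I expect the Lipschitz and approximation bookkeeping near $y\to0$ and $y^*\to0$ to be the fiddliest part; the concavity argument itself is short.
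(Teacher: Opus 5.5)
Your joint-concavity argument is the paper's. You substitute $y=\lambda y^*$ and write $\Phi(y^*,R^*)=\min_{\lambda\in[0,1]}\bigl(\bar{s}(\lambda y^*)-(\ln\lambda+1)R^*\bigr)$, a pointwise minimum of functions jointly concave in $(y^*,R^*)$.

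The gap is in the algorithmic half, specifically in evaluating $\Phi$. Your plan relies on a two-sided Lipschitz bound for the inner objective, and the ingredients you cite do not give one.
\begin{itemize}
\item $\bar{s}$ is not $1$-Lipschitz. Monotonicity of $\underline{c}$ only gives the upper bound $\bar{s}'=1-\underline{c}'\le 1$. The slope $\underline{c}'$ can be arbitrarily large, even $+\infty$ at $\bar{Y}$ if $\underline{c}$ has a vertical tangent there, which a closed base action set allows. So your claim that the slope of $\bar{s}$ is ``bounded below on the relevant range'' is unsupported, and no grid mesh polynomial in $1/\epsilon$ follows from it.
\item The cutoff near $y=0$ is not quantitative. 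The term $-R^*\ln y$ blows up at a rate proportional to $R^*$, which can be arbitrarily small, and on $[y_{\min},y^*]$ its Lipschitz constant is $R^*/y_{\min}$. So ``small $y$ can be excluded'' does not by itself give a cutoff with $R^*/y_{\min}$ polynomially bounded, uniformly over $R^*\in[0,\bar{Y}]$.
\end{itemize}

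The missing observation, which is what the paper uses, is that approximating a \emph{minimum} on a grid needs only an upper bound on the derivative. We have
\[
\frac{d}{d\lambda}\Bigl(\bar{s}(\lambda y^*)-(\ln\lambda+1)R^*\Bigr)=y^*\bar{s}'(\lambda y^*)-\frac{R^*}{\lambda}\le y^*\le 1 .
\]
Hence if $\lambda$ is a minimizer, any grid point $\lambda_1\in[\lambda,\lambda+\epsilon]$ has value at most the minimum plus $\epsilon$. The term $-R^*/\lambda$, which you treated as the obstacle, only helps. You need neither a lower bound on slopes nor a cutoff at $0$, and the same works directly in $y$, since $\bar{s}'(y)-R^*/y\le 1$.

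An $\epsilon$-grid therefore evaluates $\Phi$ within $\epsilon$ using $O(1/\epsilon)$ oracle calls. The outer concave maximization then goes to a standard convex optimization method with this approximate oracle, as in your cutting-plane alternative.
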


\begin{proof}
Observe that by substituting $y=\lambda y^*$ in the  minimization, we have
\begin{align*}
\Phi(y^*,R^*)=\min_{\lambda\in[0,1]}\reals(\lambda y^*)-(\ln \lambda+1)R^*.
\end{align*}

For each $\lambda\in[0,1]$, $\reals(\lambda y^*)-(\ln \lambda+1)R^*$ is jointly concave in $(y^*,R^*)$. Consequently, $\min_{\lambda\in[0,1]}\reals(\lambda y^*)-(\ln \lambda+1)R^*$ is a concave function in $(y^*,R^*)$. That is, the agent's problem is reduced to maximizing a two-variable concave function over $(y^*,R^*)\in[0,\UpY]\times[0,\UpY]$. 

Given oracle access to $\reals(\cdot)$, we construct an oracle evaluating the concave function $\Phi(y^*,R^*)$.
Calculate the derivative $\frac{d}{d\lambda}(\reals(\lambda y^*)-(\ln \lambda+1)R^*)=y^*\reals'(\lambda y^*)-\frac{R^*}{\lambda}\leq y^*\reals'(\lambda y^*)\leq y^*$, where we recall $\reals'(y)=1-\realc'(y)\leq 1$ for all $y\in[0,1]$.
So for any $\lambda$ and $\lambda_1\in[\lambda,\lambda+\epsilon]$, we have
\[\reals(\lambda_1 y^*)-(\ln \lambda_1+1)R^*-(\reals(\lambda y^*)-(\ln \lambda+1)R^*)\leq\epsilon y^*\leq \epsilon.\]

Therefore, given any $(y^*,R^*)\in[0,\UpY]\times[0,\UpY]$, we can evaluate $\Phi(y^*,R^*)$ within $\epsilon$ error by enumerating an $\epsilon$-grid for $\lambda\in[0,1]$ with $O(\epsilon^{-1})$ queries.

This reduces the problem to a standard convex optimization over $(y^*,R^*)\in[0,\UpY]\times[0,\UpY]$, which can be solved within $\epsilon$ error within $O(\mathrm{poly}(\log(\epsilon^{-1})))$ iterations by standard algorithms. As each iteration takes $O(\epsilon^{-1})$ time to evaluate $\Phi(y^*,R^*)$, the total running time is polynomial in $\epsilon^{-1}$.

\qed
\end{proof}
\subsection{Structural Properties}
Additionally, we present some useful structural properties that will be utilized in the welfare analysis. 

Firstly, the following proposition provides a geometrical characterization of the agent's optimal choice of $R^*$ given fixed $y^*$.
Intuitively, when $y^*$ is fixed, the optimized objective $\reals(y)+R^*(\ln y^*-\ln y-1)$ is affine in $R^*$. The maximin optimization over $R$ and $y$ suggests a dual interpretation: the term $R(\ln y^*-\ln y-1)$ penalizes deviation of $\ln y$ from $\ln y^*-1$, while $R$ acts as a dual variable. This motivates a substitution $t=\ln y$, which transforms the surplus curve $\reals(y)$ to $\reals(e^t)$ over $t$. The maximin optimization becomes finding a tight supporting line of the transformed curve at $t=\ln y^*-1$.

\begin{proposition}
\label{thm:fix-y*-lowerconvex-opt}
Fix any $y^*\in[0,\UpY]$, it holds that
\[\max_{R^*\in[0,\UpY]}\min_{y\in[0,y^*]}\reals(y)+R^*(\ln y^*-\ln y-1)=\lowconv[\mathcal{H}_{y^*}](\ln y^*-1),\]
where $\mathcal{H}_{y^*}:=\{(t,\reals(e^t)):t\in(-\infty,\ln y^*]\}$ is a curve transformed from the graph of $\reals(y)$ on interval $y\in(0,y^*]$, substituting $t=\ln y$.

Specifically, any $R^*\in[0,\UpY]$ is an optimal solution  if and only if $R^*$ is a subgradient of $\lowconv[\mathcal{H}_{y^*}](\cdot)$ at $\ln y^*-1$.
\end{proposition}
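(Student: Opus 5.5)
The plan is to substitute $t=\ln y$ and write $h(t):=\reals(e^t)$ for $t\in(-\infty,\ln y^*]$; write $t^*:=\ln y^*-1$. For fixed $y^*$ the inner objective becomes
\[
\min_{t\le \ln y^*} h(t)-R^*(t-t^*),
\]
together with the boundary point $y=0$. There $\reals(0)=0$ and $-R^*\ln 0=+\infty$ when $R^*>0$, so this point is irrelevant for $R^*>0$ and matters only when $R^*=0$. The quantity $\min_{t}\,h(t)-R(t-t^*)$ is exactly the value at $t^*$ of the affine minorant of $h$ with slope $R$ that touches $h$ from below. So the left side equals $\sup_{R\in[0,\UpY]}\bigl(\inf_t (h(t)-Rt)+Rt^*\bigr)$, which is a restricted Legendre--Fenchel biconjugate of $h$ evaluated at $t^*$.

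The first step is to show that this unrestricted supremum over all $R\in\mathbb{R}$ equals $\lowconv[\mathcal{H}_{y^*}](t^*)$. This is the standard fact that the convex envelope of a function equals the supremum of its affine minorants. Here $h$ is continuous and bounded below by $-\realc(y^*)$ on the domain, and its epigraph is closed up to the left tail $t\to-\infty$, where $h(t)\to 0$. I will also note that $t^*$ lies in the interior of the domain, since $t^*<\ln y^*$. The lower convex envelope is therefore finite there and has a nonempty subdifferential at $t^*$, which gives attainment.

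The second step is to restrict the slopes to $[0,\UpY]$ without changing the value. For negative slopes, observe that $h(t)\to 0=\reals(0)$ as $t\to-\infty$. Any affine minorant with $R<0$ tends to $+\infty$ as $t\to-\infty$, so it would violate the bound there. Hence all useful slopes satisfy $R\ge 0$, and in particular every subgradient of the envelope at $t^*$ is nonnegative. For the upper bound, I will use concavity of $\reals$ with $\reals(0)=0$ to get $\frac{d}{dt}h(t)=e^t\reals'(e^t)\le \reals(e^t)\le e^t\le \UpY$. This should bound the subgradients of the envelope at $t^*$ by $\UpY$. The key point to check carefully is that the envelope's subgradients are bounded by the slopes of $h$, or by chord slopes toward the left tail, and these are at most $\sup_t h'(t)\le \UpY$. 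Since the objective is concave in $R$, and the maximizers over all of $\mathbb{R}$ are exactly the subgradients $\partial\,\lowconv[\mathcal{H}_{y^*}](t^*)$, the restriction loses nothing.

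The third step is the "if and only if". A slope $R$ attains the supremum exactly when the line $t\mapsto \lowconv(t^*)+R(t-t^*)$ is a minorant of $h$, and hence of its convex envelope, while touching the envelope at $t^*$. This is precisely the definition of $R$ being a subgradient of the envelope at $t^*$. I will also treat the degenerate case $y^*=0$ separately, where both sides are trivially $0$.

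I expect the main obstacle to be the boundary behaviour: the point $y=0$ corresponds to $t=-\infty$, so it needs separate care. The bound on the subgradients by $\UpY$ is the other delicate part. I will first try to handle both via the limit $h(t)\to 0$ and the derivative bound $h'(t)\le e^t$.
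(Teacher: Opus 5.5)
Your proposal is correct and is essentially the paper's argument in Fenchel-biconjugate language: weak duality from the convex-combination representation of the envelope at $\ln y^*-1$, attainment by a subgradient at this interior point, exclusion of negative slopes because $\reals(e^t)\to 0$ as $t\to-\infty$, and the equivalence read off from the subgradient inequality. The one real difference is how you rule out slopes above $\UpY$: the paper uses that the objective equals $\reals(y^*)-R$ and decreases for $R\ge y^*$ (as in the proof of \Cref{lem:reparameter-R}), whereas your derivative bound closes via the chord to the right endpoint, not the left tail. Indeed, the envelope equals $\reals(y^*)$ at $\ln y^*$, and the line of slope $y^*$ through $(\ln y^*,\reals(y^*))$ minorizes $\reals(e^t)$, so every subgradient at $\ln y^*-1$ is at most $\reals(y^*)-\lowconv[\mathcal{H}_{y^*}](\ln y^*-1)\le y^*\le\UpY$.
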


Next, we show another property that the agent's optimal choice of action reward $y^*$ is at least the smallest surplus-maximizing action reward. This property can be useful to narrow the possible range of optimal $y^*$. For example, if the feasible surplus curve $\reals(y)$ is strictly increasing on $[0,1]$, the agent's optimal choice of $y^*$ is always $1$. 
\begin{proposition}
\label{lem:position-y*>=y+}
Let $y^+\in\arg\max_{y\in[0,\UpY]}\reals(y)$ be the smallest action reward achieving the maximal feasible surplus.
For any optimal choice of $y^*$ for the agent, it holds that $y^*\geq y^+$.
It follows that
\begin{align*}
\max_{A\subseteq F}\Uver(A)=\max_{y^*\in[y^+,\UpY]}\lowconv[\mathcal{H}_{y^*}](\ln y^*-1).
\end{align*}
\end{proposition}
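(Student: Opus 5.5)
The proof is an exchange argument carried out on the function $\Phi(y^*,R^*)=\min_{y\in[0,y^*]}\reals(y)+R^*(\ln y^*-\ln y-1)$ of \Cref{thm:tractable-convex-program}. By \Cref{lem:reparameter-R}, the agent's optimal utility is $\max\Phi$ over $[0,\UpY]^2$. I will show that if $y^*<y^+$, then replacing $y^*$ by $y^+$ (keeping $R^*$) gives a strictly larger objective. Hence no optimal $y^*$ lies below $y^+$. The displayed formula then follows by restricting the outer maximum to $y^*\in[y^+,\UpY]$ and applying \Cref{thm:fix-y*-lowerconvex-opt} for each fixed $y^*$.

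\textbf{Comparison step.} Fix $R^*\ge 0$ and $y_1<y_2:=y^+$. Split the minimization in $\Phi(y_2,R^*)$ into $y\in[0,y_1]$ and $y\in(y_1,y_2]$.

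On $[0,y_1]$ each term equals the corresponding term of $\Phi(y_1,R^*)$ plus $R^*(\ln y_2-\ln y_1)\ge 0$. So this part is at least $\Phi(y_1,R^*)$.

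On $(y_1,y_2]$ I use that $\reals$ is concave with unique-smallest maximizer $y^+$, so $\reals$ is strictly increasing on $[0,y^+]$. The term $R^*(\ln y_2-\ln y-1)$ is at least $-R^*$ there. So the objective is at least $\reals(y)-R^*>\reals(y_1)-R^*$. This lower bound is exactly the value of the $\Phi(y_1,R^*)$ objective at $y=y_1$, hence it is $\ge\Phi(y_1,R^*)$.

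Therefore $\Phi(y_2,R^*)\ge\Phi(y_1,R^*)$, so moving to $y^+$ never hurts. The inequality is strict when $R^*>0$, since the first part then gains a positive shift.

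\textbf{Strictness and the main obstacle.} Weak monotonicity only shows that some optimum has $y^*\ge y^+$, whereas the statement says every optimal $y^*$ satisfies this, so strictness is the main obstacle. Over the $[0,y_1]$ part, $\Phi(y_2,R^*)\ge\Phi(y_1,R^*)+R^*\ln(y_2/y_1)$. Over the $(y_1,y_2]$ part, the infimum is attained by closedness and continuity, so the bound $\reals(y)-R^*>\reals(y_1)-R^*$ holds with a positive gap at the minimizer.

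The degenerate case is $R^*=0$. Then $\Phi(y_1,0)=\min_{[0,y_1]}\reals=0$, because $\reals(0)=0$ and $\reals$ is increasing on $[0,y^+]$. But $\Phi$ can then be improved by taking a small $R^*>0$ with $y^*=y^+$: near $y=0$ the term $-R^*\ln y\to+\infty$, and I will check that $\Phi(y^+,R^*)>0$ for small $R^*$ whenever $y^+>0$. If $y^+=0$, the statement is trivial.

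The subtle point is that $\min$ of two strictly improved parts is strictly improved only if both gaps are positive. I would handle this by comparing minimizers directly and using compactness. With strictness established, every optimal $y^*$ lies in $[y^+,\UpY]$, and \Cref{thm:fix-y*-lowerconvex-opt} converts the inner maximum over $R^*$ into $\lowconv[\mathcal{H}_{y^*}](\ln y^*-1)$.
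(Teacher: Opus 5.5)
Your proposal is correct in substance but takes a different route from the paper. The paper argues on the lower-convex-hull side and never fixes $R^*$. It takes a distribution $D_1$ on $(-\infty,\ln y^+]$ that represents $\mathrm{lowerConv}[\mathcal{H}_{y^+}](\ln y^+-1)$ and translates it by $\ln y^*-\ln y^+$, which gives a feasible representation for $\mathcal{H}_{y^*}$ at $\ln y^*-1$. Strictness is then automatic, because $\bar{s}(e^t)>\bar{s}(e^t y^*/y^+)$ at every point of the support. You instead argue on the primal side with $R^*$ held fixed. This gives the stronger pointwise statement that $\Phi(\cdot,R)$ is nondecreasing on $[0,y^+]$ for every $R$, strictly when $R>0$. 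It also avoids needing the hull to be realized by a distribution. The price is a separate treatment of $R^*=0$ and a compactness argument for strictness.

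Two steps need tightening.

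\textbf{The strictness step.} The infimum over $(y_1,y_2]$ may be approached as $y\downarrow y_1$. There your bound $\bar{s}(y)-R^*>\bar{s}(y_1)-R^*$ degenerates to equality, so it gives no gap. At that boundary point the gap comes from the logarithmic term instead: the objective of $\Phi(y_2,R^*)$ at $y_1$ equals $\bar{s}(y_1)-R^*+R^*\ln(y_2/y_1)$. A cleaner fix removes the split altogether. Compare the objective of $\Phi(y_2,R)$ at $y$ with that of $\Phi(y_1,R)$ at $y\,y_1/y_2$. The $R$-terms coincide exactly, and $\bar{s}(y)>\bar{s}(y\,y_1/y_2)$ for $y\in(0,y_2]$. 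For $R>0$ the minimizer lies in $(0,y_2]$, since the objective blows up at $0$, so strictness is immediate. This is precisely the paper's log-shift transplanted into your fixed-$R$ setting.

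\textbf{The case $R^*=0$.} The divergence of $-R\ln y$ near $0$ does not by itself show $\Phi(y^+,R)>0$. Use the concavity bound $\bar{s}(y)\ge (y/y^+)S^*$ on $[0,y^+]$. Substituting $u=y/y^+$ gives $\Phi(y^+,R)\ge\min_{u\in(0,1]}\bigl(S^*u-R\ln u-R\bigr)=R\ln(S^*/R)$, which is positive for $0<R<S^*$. Alternatively, cite \Cref{thm:Uver-1/e-approximate-FBwelfare}, whose proof does not rely on this proposition, so there is no circularity.
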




To help understand the characterization results, we present the following example.
\begin{example}
\label{example:lowerconvexhull-optR}
Consider an instance with feasible surplus curve
$\reals(y)=\theta y$ 
for $y\in[0,1]$, and $\theta\in(0,1)$ is a constant.
By \Cref{lem:position-y*>=y+}, since $\reals(y)$ is strictly increasing in $y\in[0,1]$ and the maximal feasible surplus is achieved at $y^+=1$, the agent's optimal choice of $y^*$ is $y^*=1$.
In \Cref{thm:fix-y*-lowerconvex-opt}, the point set $\mathcal{H}_{y^*}=\{(t,\reals(e^t)):t\in(-\infty,\ln y^*]\}$ for $y^*=1$ is given by
\[\mathcal{H}_{1}:=\{(t,\theta e^t):t\in(-\infty,0]\}.\]
Since $\theta e^t$ is convex in $t$, the lower convex hull of $\mathcal{H}_1$ is itself, i.e., 
\[\lowconv[\mathcal{H}_{1}](t)=\theta e^t.\]
By \Cref{thm:fix-y*-lowerconvex-opt}, we have that the agent's optimal utility is 
\[\lowconv[\mathcal{H}_{y^*}](\ln y^*-1)=\lowconv[\mathcal{H}_{1}](-1)=\frac{\theta}{e},\]
and the optimal choice of $R^*=(1-\alpha^*)y^*$ (i.e. the principal's revenue) is the subgradient of $\lowconv[\mathcal{H}_{y^*}]$ at $\ln y^*-1$, which equals
\[\left.\frac{d \theta e^t}{dt}\right|_{t=-1}=\left.\theta e^t\right|_{t=-1}=\frac{\theta}{e}.\]

Remarkably, observe that the maximum feasible surplus in the example is $S^*:=\reals(y^+)=\theta$, and the agent's utility $\frac{\theta}{e}$ is a $\frac1e$ fraction of $S^*$. We will prove a matching lower bound in the next section: the agent's utility under optimal disclosed action space always achieves at least $\frac1e S^*$.
\end{example}

\section{Utility and Welfare Implications under Verifiable Costs}
\label{sec:welfare}
In this section, we analyze the welfare implications of strategic disclosure under verifiable costs. We mainly focus on the binary-outcome setting (or when the principal is restricted to linear contracts), while our guarantee on agent's utility can be extended to general outcome spaces.
We consider two types of baselines: The first-best surplus $S^*$, and the utilities in the canonical model with public knowledge of $F$.

\newcommand{\Ucan}{U^{\mathrm{can}}}
\newcommand{\Rcan}{R^{\mathrm{can}}}
\newcommand{\Scan}{S^{\mathrm{can}}}

We denote the principal's revenue, the agent's utility, and the social welfare in the canonical model by $\Rcan$, $\Ucan$, $\Scan$, respectively. Formally, let $w^{\mathrm{can}}: \mathcal{O} \to [0, \infty)$ be the revenue-optimal contract for feasible action space $F$ in the canonical model, i.e.,
$w^{\mathrm{can}}\in\arg\max_{w:\mathcal{O} \to [0, \infty)}\mathrm{Rev}(w,F)$,
breaking ties in favor of the agent's utility.
We define
\[\Rcan=\mathrm{Rev}(w^{\mathrm{can}},F),\quad\Ucan=\max_{a\in F}\mathbb{E}_{o \sim q_a}[w^{\mathrm{can}}(o)] - c_a,\quad\Scan=\Rcan+\Ucan.\]

\newcommand{\Udis}{U^{\mathrm{dis}}}
\newcommand{\Rdis}{R^{\mathrm{dis}}}
For convenience, we also denote the agent's utility and the principal's revenue induced by the agent's optimal disclosure strategy (under verifiable costs) by $\Udis$ and $\Rdis$.

In the remainder of this section, as our main focus is on the binary-outcome setting, we retain the notation from \Cref{sec:verifiable-cost} unless otherwise specified, and the first-best surplus can be rewritten as 
$S^*=\max_{y\in[0,\UpY]}\reals(y)$.

\subsection{Utility Guarantees Relative to First-Best Surplus}
We firstly show that under verifiable costs and binary outcome space, when the agent strategically selects the disclosed action set, her optimal utility is at least a $\frac1{e}$ fraction of $S^*$. As the social welfare is at least the agent's utility, an immediate implication is that the social welfare is approximately optimal under the agent's strategic disclosure.

\begin{theorem}
\label{thm:Uver-1/e-approximate-FBwelfare}
Under verifiable costs and binary outcome space, the agent's optimal utility $\Udis=\max_{A\subseteq F}\Uver(A)$ satisfies that
\[\Udis\geq\frac1e S^*.\]
\end{theorem}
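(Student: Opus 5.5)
We will prove the bound by exhibiting a particular feasible pair $(y^*,R^*)$ in the characterization of \Cref{lem:reparameter-R}. Concretely, we take $y^*=y^+$, the smallest surplus-maximizing reward, and $R^*=S^*/e$. The goal is then to show that
\[
\Phi(y^+,S^*/e)=\min_{y\in[0,y^+]}\reals(y)+\frac{S^*}{e}\left(\ln y^+-\ln y-1\right)\geq \frac{S^*}{e}.
\]
Since $\Udis=\max_{y^*,R^*}\Phi(y^*,R^*)$, this gives the theorem. We first check feasibility. If $S^*=0$ the claim is trivial. Otherwise $R^*=S^*/e\leq S^*\leq y^+\leq\UpY$, because $\reals(y)\leq y$.

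The key ingredient is concavity of $\reals$ together with $\reals(0)=0$. For $y\in[0,y^+]$ these give the chord bound $\reals(y)\geq \frac{y}{y^+}\reals(y^+)=S^*\lambda$, where $\lambda:=y/y^+\in[0,1]$. Substituting, the objective at each $y$ is at least
\[
S^*\lambda-\frac{S^*}{e}(\ln\lambda+1)=S^*\Bigl(\lambda-\frac{\ln\lambda}{e}-\frac1e\Bigr).
\]
It remains to minimize $h(\lambda)=\lambda-\frac{\ln\lambda}{e}$ over $\lambda\in(0,1]$; the point $\lambda=0$ is harmless because there the $-\ln\lambda$ term makes the objective $+\infty$. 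We have $h'(\lambda)=1-\frac{1}{e\lambda}$, which vanishes at $\lambda=1/e$. The minimum value is $h(1/e)=\frac1e+\frac1e=\frac2e$. The lower bound is therefore $S^*\bigl(\frac2e-\frac1e\bigr)=\frac{S^*}{e}$, as required.

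This is exactly the linear case of Example~\ref{example:lowerconvexhull-optR}, and the chord inequality reduces every instance to it. The only point needing care is the boundary/degenerate handling: $y=0$ in the minimization (conventionally $+\infty$, or handled by continuity as $\reals(0)=0$ plus an unbounded log term), and $y^+=0$ which forces $S^*=0$. I expect no real obstacle beyond confirming that \Cref{lem:reparameter-R} permits evaluating at $R^*=S^*/e$ even if $R^*$ exceeds the original constraint $R^*\le y^*$. Here that constraint holds anyway, since $S^*/e\le y^+$.
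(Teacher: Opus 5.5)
Your proof is correct. It shares the paper's two key ingredients, fixing $y^*=y^+$ and the chord bound $\reals(y)\ge \frac{y}{y^+}S^*$ from concavity and $\reals(0)=0$, but it handles the maximization over $R^*$ differently.

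The paper first invokes \Cref{thm:fix-y*-lowerconvex-opt} to replace $\max_{R^*}\Phi(y^+,R^*)$ by the hull value $\lowconv[\mathcal{H}_{y^+}](\ln y^+-1)$. It writes that value as a two-point convex combination of points of $\mathcal{H}_{y^+}$, applies the chord bound at each point, and uses Jensen's inequality for $e^t$ to reach $\frac{e^{\ln y^+-1}}{y^+}S^*=S^*/e$.

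You stay on the dual side. You fix the multiplier $R^*=S^*/e$, which is the optimal one in the linear instance of \Cref{example:lowerconvexhull-optR}, and bound the inner minimum pointwise by minimizing $\lambda-\frac{\ln\lambda}{e}$. This needs only \Cref{lem:reparameter-R}, and only the easy direction: any feasible $(y^*,R^*)$ gives a lower bound. The subgradient and convex-hull machinery is never used. Your feasibility check $S^*/e\le y^+\le\UpY$ is correct, and it even keeps you inside the original region $R^*\le y^*$, so a genuine $\alpha^*\in[0,1]$ exists.

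Your route is also constructive: it exhibits a disclosure in which the principal earns $S^*/e$ and the agent at least $S^*/e$. It is essentially the binary-outcome version of the construction the paper uses later for \Cref{thm:Extension-GeneralOutcome-1/eS*}, where the revenue is likewise set to $\frac1e(r^+-c^+)$.

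The paper's route has different advantages. It bounds the optimal value at $y^*=y^+$ without guessing a multiplier. It also stays within the lower-convex-hull framework that the later revenue results rely on, such as \Cref{thm:L-bounded-revenueguarantee} and \Cref{thm:impact-principalrevenue}. For this theorem alone, your argument is shorter and depends on fewer lemmas.
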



\begin{proof}
By \Cref{lem:reparameter-R} and \Cref{thm:fix-y*-lowerconvex-opt}, we have 
\begin{align*}
\max_{A\subseteq F}\Uver(A)&=\max_{y^*\in[0,\UpY]}\lowconv[\mathcal{H}_{y^*}](\ln y^*-1)\\
&\geq \lowconv[\mathcal{H}_{y^+}](\ln y^+ -1).
\end{align*}
where $\mathcal{H}_{y^*}=\{(t,\reals(e^t)):t\in(-\infty,\ln y^*]\}$, and $y^+\in\arg\max_{y\in[0,\UpY]}\reals(y)$.
We prove that the agent's maximum utility when fixing $y^*=y^+$ suffices to achieve $\frac1e S^*$, i.e., $\lowconv[\mathcal{H}_{y^+}](\ln y^+ -1)\geq\frac1e S^*=\frac1e \reals(y^+)$.

Denote $t^\dag=\ln y^+-1$. In the lower convex hull of $\mathcal{H}_{y^+}=\{(t,\reals(e^t)):t\in(-\infty,\ln y^+]\}$, the point at $t^\dag$ is a convex combination of at most two points $(t_1,\reals(e^{t_1})),(t_2,\reals(e^{t_2}))\in\mathcal{H}_{y^+}$. That is, there exist $p_1\in[0,1],p_2=1-p_1$ such that $p_1t_1+p_2t_2=t^\dag$, and
\[p_1\reals(e^{t_1})+p_2\reals(e^{t_2})=\lowconv[\mathcal{H}_{y^+}](t^\dag).\]
By the concavity of $\reals(\cdot)$ and $\reals(0)=0$, we have
\[
\reals(y)\geq \frac{y}{y^+}\reals(y^+),\quad\forall y\in[0,y^+].
\]
Since $t_1,t_2\in(-\infty,\ln y^+]$, we have $\reals(e^{t_1})\geq \frac{e^{t_1}}{y^+}\reals(y^+)$ and $\reals(e^{t_2})\geq \frac{e^{t_2}}{y^+}\reals(y^+)$. 
Therefore, it holds that
\begin{align*}
p_1\reals(e^{t_1})+p_2\reals(e^{t_2})
\geq \frac{p_1e^{t_1}+p_2e^{t_2}}{y^+}\reals(y^+)
\geq \frac{e^{p_1t_1+p_2t_2}}{y^+}\reals(y^+)
=\frac{e^{t^\dag}}{y^+}\reals(y^+)
\end{align*}
Here the second inequality is by Jensen's inequality.
Recall that $t^\dag=\ln y^+-1$ and $\lowconv[\mathcal{H}_{y^+}](t^\dag)=p_1\reals(e^{t_1})+p_2\reals(e^{t_2})$, we obtain
\[\lowconv[\mathcal{H}_{y^+}](\ln y^+ -1)\geq \frac1e \reals(y^+)=\frac1e S^*.\]
It follows that $\Udis=\max_{A\subseteq F}\Uver(A)\geq\frac1e S^*$.
\qed
\end{proof}
This lower bound on agent's utility is tight: Recall \Cref{example:lowerconvexhull-optR}, where the feasible surplus curve is in the form of $\reals(y)=\theta y$ and $S^*=\theta$. The agent's optimal utility in this example is $\frac{\theta}{e}$, which equals $\frac1e S^*$.

Although the agent's strategic selection of disclosed action set exhibits a strong guarantee on social welfare, we show that the principal's revenue can be arbitrarily small compared to the first-best welfare.

\begin{proposition}
\label{thm:principal-rev-inapproximate-welfare}
For any $\epsilon>0$, there exists an instance with verifiable costs and binary outcome space, such that the principal's revenue $\Rdis$ satisfies
\[\Rdis=(1-\alpha^*)y^*<\epsilon\cdot S^*,\]
where the agent's optimal disclosed action set is characterized by $\alpha^*$ and $y^*$ as in \eqref{eq:Uver-with-MinCost}, inducing the principal's optimal contract $w_{\alpha^*}$ and anticipated action reward $y^*$.
\end{proposition}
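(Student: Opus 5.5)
We will prove the statement by constructing an explicit family of binary-outcome instances with a ``kinked'' surplus curve. In these instances the agent can extract almost all of the first-best surplus, and the revenue identity forces $\Rdis$ to be of order $S^*/\ln(1/\delta)$. Fix a small $\delta\in(0,1)$ and take base actions $F_B=\{(0,0),(\delta,0),(1,1-\delta)\}$. Then $\realc(y)=0$ on $[0,\delta]$ and $\realc(y)=y-\delta$ on $[\delta,1]$, so $\reals(y)=\min\{y,\delta\}$, $\UpY=1$ and $S^*=\delta$. Intuitively, the surplus is gained almost entirely at tiny rewards, so in the variable $t=\ln y$ the transformed curve $\reals(e^t)=\min\{e^t,\delta\}$ is essentially flat around $t=\ln y^*-1$ whenever $y^*$ is not tiny. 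The supporting slope there, which by \Cref{thm:fix-y*-lowerconvex-opt} is the revenue, should therefore be small.

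\textbf{Step 1 (lower bound on the agent's utility).} Write $g_{y^*}:=\lowconv[\mathcal{H}_{y^*}]$. By \Cref{lem:position-y*>=y+} and \Cref{thm:fix-y*-lowerconvex-opt}, $\Udis=\max_{y^*\in[\delta,1]}g_{y^*}(\ln y^*-1)\ge g_1(-1)$. We will lower-bound $g_1(-1)$ directly by bounding every convex combination $p_1t_1+p_2t_2=-1$ of points of $\mathcal{H}_1$. Points with $t_i\ge\ln\delta$ have value exactly $\delta$. If one point has $t_1\le\ln\delta$ and the other has $t_2\in[-1,0]$, the value is $\delta-p_1(\delta-e^{t_1})$ with $p_1=(t_2+1)/(t_2-t_1)\le 1/(-t_1)\le 1/\ln(1/\delta)$. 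Hence $g_1(-1)\ge\delta\bigl(1-\tfrac{1}{\ln(1/\delta)}\bigr)$. This one-dimensional case analysis is routine, but it must be done carefully, for instance checking that both points having $t\le\ln\delta$ is impossible once $\ln\delta<-1$.

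\textbf{Step 2 (upper bound on revenue).} Let $(y^*,R^*)$ be the agent's optimal choice, so $\Rdis=R^*$. By \Cref{thm:fix-y*-lowerconvex-opt}, $R^*$ is a subgradient of $g_{y^*}$ at $t^\dagger=\ln y^*-1$. The subgradient inequality at $s=\ln y^*$ gives
\[
g_{y^*}(\ln y^*)\ge g_{y^*}(t^\dagger)+R^*.
\]
Here $g_{y^*}(\ln y^*)\le\reals(y^*)\le\delta$, and $g_{y^*}(t^\dagger)=\Udis\ge\delta(1-1/\ln(1/\delta))$ by Step 1. Therefore $R^*\le\delta/\ln(1/\delta)=S^*/\ln(1/\delta)$, which is below $\epsilon S^*$ once $\delta<e^{-1/\epsilon}$. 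This also matches the $\Theta(1/\log L)$ scaling with $L=1/\delta$.

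The main obstacle is Step 1: we need a lower bound on the lower convex hull at $-1$ that is tight up to $O(\delta/\ln(1/\delta))$. The generic $1/e$ bound of \Cref{thm:Uver-1/e-approximate-FBwelfare} is far too weak for this. If the two-point case analysis turns out to be messy, the fallback is to exhibit an explicit supporting line. Take the line through $(0,\delta)$ with slope $\delta/\ln(1/\delta)$ and verify that it lies below $\min\{e^t,\delta\}$ on $(-\infty,0]$, up to lower-order terms. This certifies the same bound on $g_1(-1)$ via the dual characterization $\max_{R}\min_{y}$, and it avoids enumerating convex combinations.
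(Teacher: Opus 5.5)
Your proof is correct, but it takes a different route from the paper's.

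The paper chooses $\bar{s}(y)=y$ on $[0,1/M]$ and $\bar{s}(y)=\frac1M(\ln(My)+1)$ on $[1/M,1]$. This instance is engineered so that $\bar{s}(e^t)$ is already convex in $t$: exponential, then affine with slope $1/M$. Three things then follow immediately. First, $y^*=1$ by strict monotonicity of $\bar{s}$. Second, the lower convex hull of $\mathcal{H}_1$ is the curve itself. Third, the subgradient at $-1$ is uniquely $1/M$, which gives the exact ratio $R^{\mathrm{dis}}/S^*=1/(1+\ln M)$.

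Your plateau instance $\bar{s}(y)=\min\{y,\delta\}$ is the one the paper later uses against the canonical revenue, and it is less convenient for exact computation. The property $y^*\ge y^+$ only yields $y^*\ge\delta$, and the hull must be ironed, so the paper needs a separate monotonicity argument to show $y^*=1$ there. You avoid all of this. You only need the lower bound $U^{\mathrm{dis}}\ge \mathrm{lowerConv}[\mathcal{H}_1](-1)$. Your Step 2 inequality holds for whatever optimal $y^*$ the agent picks, and it is just $R^{\mathrm{dis}}+U^{\mathrm{dis}}\le\bar{s}(y^*)\le S^*$. This in fact holds directly in the model, since realized welfare is $y^*-c(y^*)\le y^*-\underline{c}(y^*)$.

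Your argument thus isolates a clean principle: revenue is small whenever the agent secures a $1-o(1)$ share of the surplus. Since $L=1/\delta$ here, your bound $R^{\mathrm{dis}}/S^*\le 1/\ln(1/\delta)$ also shows tightness of the $\Theta(1/\log L)$ guarantee. What you give up is the exact revenue value, which the paper's computation provides.

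Your fallback is the better way to do Step 1, and it is exact rather than approximate. Take $\ell(t)=\delta+\frac{\delta}{\ln(1/\delta)}t$. Then $\ell(t)\le\delta\le e^t$ on $[\ln\delta,0]$, and $\ell(t)<0$ for $t<\ln\delta$. So $\ell$ lies below every point of $\mathcal{H}_1$, hence below its lower convex hull. This gives $U^{\mathrm{dis}}\ge\delta-\delta/\ln(1/\delta)$ with no two-point (Carath\'eodory) reduction. If you keep the case analysis instead, choose $\delta<\min\{e^{-1/\epsilon},e^{-1}\}$ so that $\ln\delta<-1$.
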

The proof of this proposition is by analyzing an instance with the feasible surplus curve
\[
\reals(y)=\begin{cases}
y,&y\in[0,\frac1M],\\
\frac1{M}(\ln(My)+1),&y\in[\frac1M,1].
\end{cases},
\]
where $M$ is a sufficiently large constant, resulting in $\frac{\Rdis}{S^*}=\frac1{1+\ln M}$. The detailed proof is in the appendix.

Despite the general impossibility, we prove a lower bound on principal's revenue when the ratio between the minimum and maximum rewards among non-null base actions is not too small.

\begin{theorem}\label{thm:L-bounded-revenueguarantee}

Define $L=\frac{\sup_{a\in F_B}y_a}{\inf_{a\in F_B\setminus\{a_0\}}y_a}$ as the ratio between the maximal reward and the minimal reward among feasible base actions except the null action $a_0$.
Under verifiable costs and binary outcome space, for any instance with bounded $L\in[1,+\infty)$, it holds that
\[\frac{\Rdis}{\Udis}\geq \frac{1-\frac1e}{\ln(L)+1}.\]

This implies that
\[\frac{\Rdis}{S^*}\geq \frac{\frac{1}e-\frac{1}{e^2}}{\ln(L)+1}=\Theta\left(\frac1{\log(L)}\right).\]
\end{theorem}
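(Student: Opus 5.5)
The plan is to work with the supporting-line characterization of \Cref{thm:fix-y*-lowerconvex-opt} at an optimal pair $(y^*,R^*)$, and to exploit the fact that the feasible surplus curve is linear below the smallest non-null reward. Let $\underline{y}:=\inf_{a\in F_B\setminus\{a_0\}}y_a>0$, so that $\UpY=L\underline{y}$.

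\textbf{Step 1: linearity below $\underline{y}$.} Since $F_B$ contains no non-null point with reward below $\underline{y}$, the lower convex hull $\realc=\lowconv[F_B]$ on $[0,\underline{y}]$ is the segment from $(0,0)$ to $(\underline{y},\realc(\underline{y}))$. Hence $\reals(y)=\beta y$ on $[0,\underline{y}]$ with $\beta=\reals(\underline{y})/\underline{y}$. Moreover $\reals(\underline{y})\le S^*\le e\,\Udis$ by \Cref{thm:Uver-1/e-approximate-FBwelfare}, which gives
\[
\beta\underline{y}\le e\,\Udis .
\]

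\textbf{Step 2: the supporting line.} Take an optimal $(y^*,R^*)$ from \Cref{lem:reparameter-R}, and set $t_0=\ln y^*-1$. The principal's revenue is $\Rdis=R^*=(1-\alpha^*)y^*$, and \Cref{thm:fix-y*-lowerconvex-opt} gives $\Udis=\lowconv[\mathcal{H}_{y^*}](t_0)$ with $R^*$ a subgradient at $t_0$. Therefore the line $\ell(t)=\Udis+R^*(t-t_0)$ lies below $\reals(e^t)$ for all $t\le\ln y^*$. Equivalently, I can read this directly from the min-max form: $\Udis\le\reals(y)+R^*(\ln y^*-\ln y-1)$ for every $y\in(0,y^*]$.

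\textbf{Step 3: evaluate at $y=\underline{y}e^{-s}$.} Apply this inequality at $y=\underline{y}e^{-s}$ for a free parameter $s\ge 0$. This point lies in $(0,y^*]$, since $y^*\ge y^+\ge\underline{y}$ by \Cref{lem:position-y*>=y+} when $S^*>0$; the case $S^*=0$ is trivial. Using Step 1 and $\ln(y^*/\underline{y})\le\ln L$, I get
\[
\Udis\le \beta\underline{y}e^{-s}+R^*(\ln L-1+s)\le e^{1-s}\Udis+R^*(\ln L-1+s).
\]
Choosing $s=2$ yields $(1-\tfrac1e)\Udis\le R^*(\ln L+1)$, which is exactly the first claim.

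\textbf{Step 4: the second claim.} Multiplying the first claim by $\Udis\ge S^*/e$ from \Cref{thm:Uver-1/e-approximate-FBwelfare} gives the stated $\frac{1/e-1/e^2}{\ln L+1}$ bound. The $\Theta$ part additionally needs the matching upper bound from the instance behind \Cref{thm:principal-rev-inapproximate-welfare}, where $\Rdis/S^*=1/(1+\ln M)$; I would check there that $L\approx M$.

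\textbf{Main obstacle.} The delicate step is Step 1 together with the choice of evaluation point. I must make sure that the lower convex hull really is linear on $[0,\underline{y}]$ when $F_B$ is a general closed set, and that the tie-breaking and left-derivative conventions do not alter the identification $\Rdis=R^*$. The choice $s=2$ is what balances the exponentially decaying term against the logarithmic penalty to produce the constant $1-1/e$.
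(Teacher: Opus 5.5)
Your proof is correct and gives exactly the paper's constant, but it executes the argument more cleanly than the paper does. Both proofs rest on the same three ingredients:
\begin{itemize}
\item the inequality $U^{\mathrm{dis}}\le \bar{s}(y)+R^*(\ln y^*-\ln y-1)$ on $(0,y^*]$, which is the supporting line from Proposition~\ref{thm:fix-y*-lowerconvex-opt};
\item linearity of $\bar{s}$ on $[0,\underline{y}]$;
\item the bound $\bar{s}(\underline{y})\le S^*\le e\,U^{\mathrm{dis}}$.
\end{itemize}

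The paper tests the supporting line at the $R^*$-dependent tangency point $t_2=\ln\bigl(\frac{y_1}{s_1}R^*\bigr)$ of the exponential piece $\frac{s_1}{y_1}e^t$. This forces a case split, $R^*\le s_1$ versus $R^*>s_1$, with the second case tested at $t_1=\ln y_1$ instead. In the first case it also leads to the transcendental inequality $e^{\delta}\le\delta+t^\dagger-t_1+2$, which the paper resolves via $\ln x\le x/e$.

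You test at the fixed point $y=\underline{y}e^{-2}$. There the linear piece contributes at most $e^{-1}U^{\mathrm{dis}}$, and the penalty is exactly $R^*(\ln(y^*/\underline{y})+1)$. So the paper's intermediate bound $(1-\frac1e)U^{\mathrm{dis}}\le R^*(\ln(y^*/\underline{y})+1)$ follows in one line, with no cases.

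Your version has two further small advantages. It avoids the normalization $\sup_a y_a=1$. It also explicitly checks $y^*\ge\underline{y}$ via Proposition~\ref{lem:position-y*>=y+}, whereas the paper's tests at $t_1$ and $t_2\le t_1$ use this fact without stating it. Here $y^+\ge\underline{y}$ holds because $\bar{s}$ is linear with positive slope on $[0,\underline{y}]$ once $S^*>0$.

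Finally, keeping $s$ free gives a sharper bound than the one stated. Choosing $s=1+\ln(\ln L+1)$ yields $R^{\mathrm{dis}}/U^{\mathrm{dis}}\ge\frac{\ln L}{(\ln L+1)(\ln L+\ln(\ln L+1))}$, which is $\frac{1-o(1)}{\ln L}$ for large $L$. This asymptotically matches the ratio $1/\ln M$ in the paper's tight example, a refinement the paper does not state.
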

This bound is asymptotically tight: Recall the example in the proof of \Cref{thm:principal-rev-inapproximate-welfare}, where the surplus curve satisfies $L=M$, while the ratio is $\frac{\Rdis}{S^*}=\frac1{1+\ln M}$.


The logarithmic dependence on the reward spread parallels approximation guarantees for linear contracts in the canonical model \cite{SimplevsOptimalDuttingRT19}. Recall that this revenue guarantee holds for the general-outcome setting when the principal is restricted to linear contracts. Thus, even under strategic disclosure of the action space, linear contracts are still approximately optimal for the principal under mild conditions. This extends the existing arguments on the robustness of linear contracts.

\subsection{Utility Impacts Relative to Canonical Model}
Next we compare the agent's utility, principal's revenue, and the social welfare under strategic disclosure with their counterparts in the canonical model, where the true feasible set $F$ is public knowledge.

For the agent, the utility $\Udis$ under strategic disclosure is weakly higher than her utility $\Ucan$ in the canonical model. Moreover, the improvement can be unbounded.
\begin{proposition}\label{thm:impact-agentutility}
Under verifiable costs and binary outcome space, it holds for any instance that $\Udis\geq \Ucan$. Moreover, for any $K>1$ there exists an instance such that
\[\Udis>K\cdot \Ucan.\]
\end{proposition}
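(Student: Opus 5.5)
For the inequality $\Udis\geq\Ucan$, I will use the most direct disclosure: the agent reveals her whole feasible set, $A=F$. That choice is allowed under verifiable costs, since $F\subseteq F$. The principal then solves exactly the canonical problem. Both models break ties the same way: the principal picks among optimal contracts in the agent's favor, and the anticipated action is chosen in the principal's favor. So the induced contract and anticipated action coincide with $w^{\mathrm{can}}$ and its induced action, which gives $\Uver(F)=\Ucan$. Therefore $\Udis=\max_{A\subseteq F}\Uver(A)\geq\Uver(F)=\Ucan$. The one point I will need to check is that the two tie-breaking rules really match, so that the agent's utility under $A=F$ is exactly $\Ucan$. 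If they do not match exactly, I can instead argue through the cost-curve formulation \eqref{problem:cost-curve} with $c=\realc$.

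For the unbounded ratio, I will build an instance where $\Ucan$ is tiny but the $1/e$ guarantee of \Cref{thm:Uver-1/e-approximate-FBwelfare} keeps $\Udis$ large. The natural candidate is a linear surplus curve, $\realc(y)=(1-\theta)y$ on $[0,1]$, so that $\reals(y)=\theta y$. In the canonical model the principal faces a linear cost curve. Offering $\alpha\ge 1-\theta$ induces $y=1$, and the optimal choice is $\alpha=1-\theta$, at which the agent is exactly indifferent and earns zero utility. Ties are broken in the principal's favor, so she induces $y=1$, and hence $\Ucan=0$. On the other side, \Cref{example:lowerconvexhull-optR} gives $\Udis=\theta/e>0$. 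Because $\Ucan=0$, the inequality $\Udis>K\cdot\Ucan$ holds for every $K$.

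If one wants a strictly positive $\Ucan$, so that the ratio statement is not degenerate, I will perturb the instance to a two-point base set. Take $F_B=\{a_0,(\delta,0),(1,1-\theta)\}$ with small $\delta>0$. The free action with reward $\delta$ gives the agent a small rent in the canonical model: $\Ucan$ is at most the payment $\alpha\delta\le\delta$. Meanwhile $\Udis\ge S^*/e\ge\theta/e$ by \Cref{thm:Uver-1/e-approximate-FBwelfare}, since $S^*\ge\theta$ comes from the action $(1,1-\theta)$. Choosing $\delta<\theta/(eK)$ then gives $\Udis>K\cdot\Ucan$.

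The main obstacle is bounding $\Ucan$ precisely in the perturbed instance. The canonical optimal linear contract either pays slope $(1-\theta-0)/(1-\delta)$ to induce $y=1$, leaving the agent utility equal to the rent from the $\delta$-action, or it induces the $\delta$-action at $\alpha=0$. In both cases the agent's utility is $O(\delta)$, and this is the estimate I would verify.
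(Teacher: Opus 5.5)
Your argument is correct and matches the paper's proof: disclosing $A=F$ reproduces the canonical problem, so $U^{\mathrm{ver}}(F)=U^{\mathrm{can}}$ and hence $U^{\mathrm{dis}}\geq U^{\mathrm{can}}$. For the second part, the linear instance $\bar{s}(y)=\theta y$ gives $U^{\mathrm{can}}=0<\theta/e=U^{\mathrm{dis}}$, and your perturbed three-action instance is a valid optional extra, yielding $U^{\mathrm{can}}=(1-\theta)\delta/(1-\delta)>0$ for small $\delta$ and $\theta<1$, which the paper does not need because it accepts the degenerate case $U^{\mathrm{can}}=0$.
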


For the principal, the revenue $\Rdis$ is never improved under the agent's optimal strategic disclosure. Moreover, $\Rdis$ can be arbitrarily small compared with $\Rcan$.

\begin{proposition}\label{thm:impact-principalrevenue}
Under verifiable costs and binary outcome space, it holds for any instance that 
\[\Rdis\leq \Rcan.\] Moreover, for any $\epsilon>0$ there exists an instance such that
\[\Rdis<\epsilon\cdot \Rcan.\]
\end{proposition}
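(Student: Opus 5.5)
The first claim, $\Rdis\le\Rcan$, will follow from a monotonicity argument on the principal's problem. In the binary-outcome setting the principal solves $\max_{y}(1-c'_-(y))y$ over the disclosed curve. The canonical model is this same problem with $c=\realc$, so $\Rcan=\max_{y\in[0,\UpY]}(1-\realc'_-(y))y$. Since the disclosed cost curve $c$ is convex with $c\ge\realc$, one might hope that $c'_-(y^*)\ge\realc'_-(y^*)$. This is false pointwise, so I will not argue via derivatives.

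Instead I will use the revenue characterization in terms of sets: the principal's revenue from incentivizing $y$ under curve $c$ equals
\[
y-\alpha y\quad\text{with}\quad \alpha=\min\{\alpha:\ \alpha y-c(y)\ge \alpha y'-c(y')\ \forall y'\}.
\]
Equivalently, the optimal revenue is $\max_\alpha (1-\alpha)\,y_c(\alpha)$, where $y_c(\alpha)$ is the largest best response to $\alpha$ under $c$.

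The key claim is: if $c\ge \realc$, $c(0)\ge 0=\realc(0)$, and the disclosure sits on the feasible set, then for the optimal $\alpha^*$ with response $y^*$ we have $y_{\realc}(\alpha^*)\ge y^*$. This gives
\[
\Rcan\ \ge\ (1-\alpha^*)\,y_{\realc}(\alpha^*)\ \ge\ (1-\alpha^*)\,y^*=\Rdis.
\]

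I will prove the key claim directly, using the specific optimal form. By \Cref{thm:min-cost-optimal-curve}, the optimal disclosed curve on $[(1-\alpha^*)y^*,y^*]$ has slope $g(y)=1-R^*/y$, and it touches $\realc$ at $y^\dagger$, where $\realc$ and $c$ agree. Since $c\ge\realc$ and the curves touch at $y^\dagger$, we get $\realc'_+(y^\dagger)\le c'_+(y^\dagger)\le\alpha^*$ (using $y^\dagger\le y^*$ and convexity of $c$). Hence under $\realc$ the best response to $\alpha^*$ is at least $y^\dagger$.

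A cleaner route is available, and I will take it. For any $y$, the canonical revenue from contract slope $\realc'_-(y)$ is $(1-\realc'_-(y))y$. Take $y=y^\dagger$: since $c$ is tangent to or above $\realc$ at $y^\dagger$, we get $\realc'_-(y^\dagger)\le c'_-(y^\dagger)$. The constraint in \eqref{problem:min-cost-for-fixed-outcomes} then gives
\[
(1-c'_-(y^\dagger))\,y^\dagger\le R^*,
\]
and, more precisely, $c'_-(y^\dagger)=g(y^\dagger)$ when $y^\dagger\ge(1-\alpha^*)y^*$. This shows the canonical principal can get at least $(1-g(y^\dagger))y^\dagger=R^*$. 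So it suffices to establish $\realc'_-(y^\dagger)\le g(y^\dagger)$, which holds by tangency from above, since $c-\realc$ is minimized (equal to zero) at $y^\dagger$. The boundary case $y^\dagger=y^*$ is handled by $\realc'_-(y^*)\le c'_-(y^*)=\alpha^*$, because $c\ge \realc$ with equality at the right endpoint. Then $\Rcan\ge R^*=\Rdis$.

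For the separation, I will reuse the instance from \Cref{thm:principal-rev-inapproximate-welfare},
\[
\reals(y)=y \text{ on } [0,1/M],\qquad \reals(y)=\tfrac1M(\ln(My)+1) \text{ on } [1/M,1],
\]
for which $\Rdis=S^*/(1+\ln M)$. I will compute $\Rcan$ there. The canonical principal can offer slope $\realc'_-(1/M)=0$ and incentivize $y=1/M$, so $\Rcan\ge 1/M$. Meanwhile
\[
S^*=\reals(1)=\frac{1+\ln M}{M},
\]
so $\Rdis=1/M$, which only gives a ratio of $1$. That instance therefore fails, and a different one is needed.

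The main obstacle is finding a family where canonical revenue stays large while disclosure kills it. I would try $\reals(y)=\theta y$ with $\theta\to1$, i.e.\ $\realc(y)=(1-\theta)y$. Canonically, $\alpha=1-\theta$ induces $y=1$, so $\Rcan=\theta$. From \Cref{example:lowerconvexhull-optR}, $\Rdis=\theta/e$, which again gives only a constant ratio. To get unbounded separation I would add a cheap low-reward action, making $\realc$ zero on $[0,\delta]$ and steep afterward so that canonical revenue is near $\delta$. That, however, hurts $\Rcan$. The better direction is to make $\realc$ nearly zero up to $y=1$ (so $\Rcan\approx 1$) while shaping $\reals$ so that the disclosure optimum pushes $R^*$ small via the $\log$ structure. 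Concretely, I would try $\realc(y)=\eta\,\mathbb{1}$-like curves whose surplus in $t=\ln y$ coordinates has a flat lower hull near $\ln y^*-1$, so that the subgradient $R^*$ is near zero by \Cref{thm:fix-y*-lowerconvex-opt}. This is the step I would work out numerically first.
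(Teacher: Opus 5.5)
\textbf{The proof of $\Rdis\le\Rcan$ has a genuine gap.} The boundary case $y^\dagger=y^*$ is handled by a reversed inequality, and the argument never uses that $(\alpha^*,y^*)$ is the agent's \emph{optimal} choice, which is indispensable. If $c\ge\realc$ on $[0,y^*]$ with equality at $y^*$, comparing left difference quotients gives $c'_-(y^*)\le\realc'_-(y^*)$, the opposite of what you claim. For example, $\realc(y)=y/3$, $\alpha^*=0.2$, $y^*=0.8$ has $y^\dagger=y^*$ but $\realc'_-(y^*)=1/3>\alpha^*$.

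The conclusion you draw also fails for non-optimal pairs. In the instance of \Cref{thm:principal-rev-inapproximate-welfare}, disclosing $A=\{a_0,(1,\realc(1))\}$ gives $\alpha^*=\realc(1)$ and $y^*=1$. The unique maximizer in \Cref{thm:min-cost-optimal-curve} is then $y^*$ itself, and the principal earns $\reals(1)=\frac{1+\ln M}{M}>\frac1M=\Rcan$. So any argument that works for every feasible $(\alpha^*,y^*)$ must be wrong.

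Your interior case also cites the wrong one-sided derivative: touching from above gives $c'_-\le\realc'_-$. What you need there follows instead from $\realc'_-(y^\dagger)\le\realc'_+(y^\dagger)\le c'_+(y^\dagger)=g(y^\dagger)$.

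The missing step is to show that, at the agent's optimum, some touching point lies strictly below $y^*$. By \Cref{thm:fix-y*-lowerconvex-opt}, $R^*$ is a subgradient of $\lowconv[\mathcal{H}_{y^*}]$ at $\ln y^*-1$. The hull value there is a convex combination of points of $\mathcal{H}_{y^*}$ on the supporting line, one of them at some $t_1\le\ln y^*-1$. Then $y_1=e^{t_1}\le y^*/e$ minimizes $\reals(y)-R^*\ln y$ on $(0,y^*]$. Hence $\reals'_-(y_1)\ge\reals'_+(y_1)\ge R^*/y_1$, and $\Rcan\ge\reals'_-(y_1)\,y_1\ge R^*$. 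With this repair your route is a valid and somewhat shorter alternative to the paper's. The paper instead averages the bound $\reals(y^*)-\reals(e^t)\le\Rcan(\ln y^*-t)$ over the distribution realizing the hull at $\ln y^*-1$.

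\textbf{The separation claim is missing.} You give no instance, and the direction you set aside is the one that works: only the ratio $\Rdis/\Rcan$ matters, so a small $\Rcan$ is harmless. The paper uses $\reals(y)=\min\{y,1/M\}$, i.e.\ zero cost up to $1/M$ and slope-one cost afterwards, so $\Rcan=\max_y\reals'_-(y)\,y=1/M$. In log coordinates $\reals(e^t)=\min\{e^t,1/M\}$ is flat on $[-\ln M,0]$. So for $y^*=1$ the lower hull of $\mathcal{H}_1$ at $t=-1$ lies on the tangent line from $(0,1/M)$ to $e^t$. That line touches at some $t_1\le-\ln M$ with $e^{t_1}(1-t_1)=1/M$, giving $R^*=e^{t_1}=\frac{1}{M(1-t_1)}\le\frac{\Rcan}{1+\ln M}$.

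You must also show that the agent's optimal $y^*$ equals $1$; the paper proves $\lowconv[\mathcal{H}_y](\ln y-1)$ is strictly increasing for $y\in[1/M,1]$. This is not a formality: $y^*=1/M$ would give $R^*=\frac{1}{eM}=\Rcan/e$, only a constant-factor gap.
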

Strategic disclosure can either damage or improve social welfare relative to the canonical model.
On the one hand, the social welfare may be damaged by strategic disclosure in some instances: Recall \Cref{example:lowerconvexhull-optR}, where the feasible surplus curve is in the form of $\reals(y)=\theta y$ for $y\in[0,1]$, which leads to $\Udis=\Rdis=\frac1e\theta$. It is not hard to verify that in the canonical model, the principal obtains $\Rcan=\theta$ by setting $\alpha=1-\theta$, with $\Ucan=0$. As $\Udis+\Rdis=\frac{2}{e}(\Rcan+\Ucan)$, there is a welfare loss caused by strategic disclosure.
On the other hand, there also exist instances where the social welfare is improved by strategic disclosure, and the improvement can be unbounded.

\begin{proposition}\label{thm:impact-welfare}
Under verifiable costs and binary outcome space, for any $K>0$ there exists an instance such that
\[\Udis+\Rdis>K(\Ucan+\Rcan).\]
\end{proposition}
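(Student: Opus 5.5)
The plan is to construct a family of instances in which the canonical principal optimally induces a very low-reward action, so that $\Scan$ is tiny, while $S^*$ is larger by a logarithmic factor. \Cref{thm:Uver-1/e-approximate-FBwelfare} then gives $\Udis+\Rdis\geq\Udis\geq S^*/e$, and the ratio of the two welfares diverges. This avoids solving for the optimal disclosure explicitly.

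\textbf{Construction.} Fix a large $M$ and a small $\eta>0$. Define the feasible surplus curve $\reals$ through its derivative on $[\frac1M,1]$:
\[
\reals(y)=y \text{ on } [0,\tfrac1M],\qquad y\,\reals'(y)=\tfrac1M (My)^{-\eta} \text{ on } [\tfrac1M,1].
\]
Integrating gives $\reals(y)=\frac1M+\frac{1}{M\eta}\bigl(1-(My)^{-\eta}\bigr)$ on $[\frac1M,1]$. Several checks are needed.
\begin{itemize}
\item $\reals'$ equals $1$ at $y=\frac1M$ from both sides and is decreasing, so $\reals$ is concave.
\item $\reals'\in[0,1]$, so $\realc(y)=y-\reals(y)$ is convex, non-decreasing, and satisfies $\realc(0)=0$.
\end{itemize}
The instance is therefore realized by taking $F_B$ to be the closed graph $\{(y,\realc(y)):y\in[0,1]\}$, which contains the null action. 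Its first-best surplus is
\[
S^*=\reals(1)=\tfrac1M\Bigl(1+\tfrac{1-M^{-\eta}}{\eta}\Bigr),
\]
which tends to $\frac1M(1+\ln M)$ as $\eta\to0$.

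\textbf{Canonical outcome.} In the binary-outcome canonical model, the principal picks $y$ to maximize $(1-\realc'_-(y))y=\reals'(y)\,y$.
\begin{itemize}
\item On $[0,\frac1M]$ we have $\reals'(y)=1$, so the revenue is $y\le\frac1M$.
\item On $[\frac1M,1]$ the revenue is $\frac1M(My)^{-\eta}$, which is strictly decreasing in $y$.
\end{itemize}
So the unique maximizer is $y=\frac1M$, with $\alpha=\realc'_-(\frac1M)=0$ and cost $\realc(\frac1M)=0$. This gives $\Rcan=\frac1M$ and $\Ucan=0$. Uniqueness of the maximizer means the tie-breaking rule is irrelevant, so $\Scan=\frac1M$.

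\textbf{Conclusion and main obstacle.} By \Cref{thm:Uver-1/e-approximate-FBwelfare},
\[
\frac{\Udis+\Rdis}{\Ucan+\Rcan}\;\geq\;\frac{S^*}{e\cdot\frac1M}\;=\;\frac1e\Bigl(1+\tfrac{1-M^{-\eta}}{\eta}\Bigr).
\]
Choosing, for example, $\eta=1/\ln M$ makes the right side of order $\ln M/e$, which exceeds any given $K$ for large $M$. The main obstacle is ensuring that the canonical principal strictly prefers the low action. The natural unperturbed curve $\frac1M(\ln(My)+1)$ produces an exact tie in revenue over $[\frac1M,1]$, and tie-breaking in favor of the agent would then raise $\Scan$. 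The $\eta$-perturbation is introduced precisely to break this tie while still keeping $S^*$ of order $\frac{\ln M}{M}$.
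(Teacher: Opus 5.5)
Your proof is correct and follows essentially the same route as the paper. Both perturb the logarithmic instance so that the canonical principal strictly prefers $y=\frac1M$ with $\alpha=0$, which gives $U^{\mathrm{can}}+R^{\mathrm{can}}=\frac1M$, while keeping $S^*=\Theta(\frac{\ln M}{M})$, and then both invoke the $1/e$ utility guarantee. The only difference is the choice of perturbation: the paper scales the log term by $\delta\in(0,1)$, so canonical revenue on $(\frac1M,1]$ is the constant $\frac{\delta}{M}<\frac1M$, whereas your power-law tail $(My)^{-\eta}$ makes that revenue strictly decreasing from $\frac1M$.
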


\subsection{Extension to General Outcome Spaces}
Lastly, we consider the verifiable cost setting with general outcome spaces, where the principal is not restricted to linear contracts. By extending \Cref{thm:Uver-1/e-approximate-FBwelfare}, we show that the agent still has a strategy to guarantee a $1/e$ fraction of the first-best surplus. The agent still significantly benefits from the power of strategic disclosure.
\begin{theorem}
\label{thm:Extension-GeneralOutcome-1/eS*}
Under verifiable costs and general outcome space, the agent's optimal utility $\Udis=\max_{A\subseteq F}\Uver(A)$ satisfies that
\[\Udis\geq \frac1e S^*.\]
\end{theorem}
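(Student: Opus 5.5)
We plan to reduce to the binary-outcome case by having the agent disclose only actions whose outcome distributions lie on the segment between the null action $a_0$ and a surplus maximizer $a^+\in\arg\max_{a\in F}(r(a)-c_a)$. Assume $S^*=r(a^+)-c_{a^+}>0$; otherwise the bound is trivial. For $\lambda\in[0,1]$ write $q_\lambda:=(1-\lambda)q_{a_0}+\lambda q_{a^+}$. The mixture $(q_\lambda,\lambda c_{a^+})$ lies in $\conv(F_B)$, or is dominated by such a point, so by wasteful costs every $(q_\lambda,c)$ with $c\geq\lambda c_{a^+}$ belongs to $F$. The agent will disclose a set $A=\{(q_\lambda,c(\lambda)):\lambda\in[0,1]\}$ for a cost curve $c(\cdot)$ with $c(\lambda)\geq\lambda c_{a^+}$. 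Any such set satisfies $A\subseteq F$, as verifiable costs require.

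The key step is to show that the principal facing this $A$ effectively solves a one-dimensional linear-contract problem. For a contract $w$, set $P_0:=\E_{q_{a_0}}[w]$ and $P_1:=\E_{q_{a^+}}[w]$. The agent's anticipated payoff from $(q_\lambda,c(\lambda))$ is then $P_0+\lambda(P_1-P_0)-c(\lambda)$, and the principal's revenue is $\lambda r(a^+)-P_0-\lambda(P_1-P_0)$.

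We need one structural fact. Since $r(a_0)=0$ and $r\geq 0$, the distribution $q_{a_0}$ is supported on zero-reward outcomes. Since $r(a^+)>0$, there is an outcome $\hat o$ with $r(\hat o)>0$ and $q_{a^+}(\hat o)>0=q_{a_0}(\hat o)$.

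Using this fact, we reduce the contract space:
\begin{itemize}
\item If $P_1\geq P_0$, replace $w$ by $w'(o)=\I[o=\hat o]\,\beta/q_{a^+}(\hat o)$ with $\beta=P_1-P_0$. This gives $P_0'=0$ and $P_1'=\beta$. Incentives over $A$ are unchanged, since every payoff shifts by the same constant $-P_0$, and revenue weakly increases. So some revenue-optimal contract is of this form, and we may restrict attention to it.
\item If $P_1<P_0$, the anticipated action has $\lambda=0$ and the principal gets nonpositive revenue. Such contracts are weakly dominated by $\beta=0$.
\end{itemize}
Tie-breaking (principal's ties in favor of the agent, agent's ties in favor of the principal) carries over without change. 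Hence, with $y:=\lambda$, the principal faces exactly the binary model of \Cref{sec:verifiable-cost}, up to scaling rewards by $r(a^+)$: reward $y\,r(a^+)$, payment $\beta y$, disclosed cost curve $c(y)$, and minimum admissible cost $\realc(y)=y\,c_{a^+}$ on $[0,1]$.

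Dividing rewards, payments and costs by $r(a^+)$ puts us in the normalized binary setting. There the feasible surplus curve is $\reals(y)=\theta y$ with $\theta=S^*/r(a^+)$ and $\UpY=1$. The optimal-disclosure characterization (\Cref{lem:reparameter-R}, \Cref{thm:fix-y*-lowerconvex-opt}) applied to this auxiliary instance is exactly \Cref{example:lowerconvexhull-optR}. Equivalently, we can invoke \Cref{thm:Uver-1/e-approximate-FBwelfare} for the auxiliary base action set $\{(0,0),(1,c_{a^+}/r(a^+))\}$. Either way, the agent obtains normalized utility at least $\theta/e$, i.e.\ utility at least $S^*/e$ after rescaling.

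The realizing cost curve comes from \Cref{thm:min-cost-optimal-curve}: a flat part followed by the $\ln$-shaped part, with $c=+\infty$ beyond $y^*$, i.e.\ those actions are simply not disclosed. It satisfies $c(y)\geq y\,c_{a^+}$, so the disclosed set lies in $F$ and the utility is attained under verifiable costs. Therefore $\Udis\geq\frac1e S^*$.

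The main obstacle is the contract-reduction step: ruling out that a general nonlinear contract exploits the richer outcome space. The two points to check carefully are that the base payment $P_0$ can be driven to zero without changing the slope, which is exactly where the support argument with $\hat o$ is used, and that tie-breaking is preserved under this equivalence.
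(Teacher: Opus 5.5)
Your proposal is correct and follows essentially the same route as the paper: disclose a continuum of mixtures between $a_0$ and a surplus maximizer $a^+$, priced by the log-shaped cost curve from the linear-surplus case, and use an outcome $\hat o$ with $r(\hat o)>0$ outside the support of $q_{a_0}$ to show the principal optimally sets the base payment to zero, which makes the contract effectively linear. The only difference is presentational: the paper writes the curve $c_{(z)}$ explicitly and checks directly that every incentivizable action yields revenue $\frac1e S^*$, so agent-favoring tie-breaking selects $z=r^+$, whereas you recover the same curve and value by applying the binary-case results to the auxiliary instance with base actions $\{(0,0),(1,c_{a^+}/r(a^+))\}$.
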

Intuitively, the agent can construct a ``linear" action space based on a single surplus-maximizing action, so that the principal's contract is essentially linear. This enables the agent to apply the construction of the optimal cost curve as if in the binary-outcome setting. 

\section{Conclusion}
\label{sec:conclusion}
This paper introduces strategic action space disclosure as a form of endogenously generated information asymmetry in principal-agent contracting. We characterize strategic outcomes under two variants of cost-verifiability regimes, revealing that agents can guarantee a substantial fraction of first-best social surplus as their utility through strategic concealment of capabilities. Under verifiable costs and binary outcome space, we analyze the utility and  welfare implications of strategic disclosure, relative to the first-best surplus and the canonical model.




%
\begin{credits}
\subsubsection{\ackname} This research was supported by the National Natural Science Foundation of China (NSFC) under Grant No. 62572010.

We thank the anonymous reviewers of EC 2026 and WINE 2026 for their insightful comments and constructive suggestions.
\end{credits}

\bibliographystyle{ACM-Reference-Format}
\bibliography{ref.bib}

\newpage
\appendix

\section{Missing Proofs}
\subsection{Proof of \Cref{lem:reparameter-R}}
\begin{proof}
We know that $\max_{A\subseteq F}\Uver(A)=\max_{y^*\in[0,\UpY],R^*\in[0,y^*]}\min_{y\in[0,y^*]}\reals(y)+R^*(\ln y^*-\ln y-1)$, so it suffices to prove that given any fixed $y^*\in[0,\UpY]$, $\min_{y\in[0,y^*]}y-\realc(y)+R(\ln y^*-\ln y-1)$ is decreasing in $R$ for $R\in[y^*,\UpY]$, which will directly imply that
\[\max_{R^*\in[y^*,\UpY]}\min_{y\in[0,y^*]}\reals(y)+R^*(\ln y^*-\ln y-1)\leq \max_{R^*\in[0,y^*]}\min_{y\in[0,y^*]}\reals(y)+R^*(\ln y^*-\ln y-1).\]

For any $R'\in[y^*,\UpY]$, for any $y\in[0,y^*]$, since $R'\geq y^*\geq y$, observe that
\[\frac{d}{dy}(y+R'(\ln y^*-\ln y-1))=1-\frac{R'}{y}\leq 0,\]
i.e., $y+R'(\ln y^*-\ln y-1)$ is non-increasing in $y$.
Since $-\realc(y)$ is also non-increasing in $y$, $y-\realc(y)+R'(\ln y^*-\ln y-1)$ is minimized when $y=y^*$. That is, $\min_{y\in[0,y^*]}y-\realc(y)+R'(\ln y^*-\ln y-1)=y^*-\realc(y^*)+R'(\ln y^*-\ln y^*-1)=y^*-\realc(y^*)-R'$.
Therefore, $\min_{y\in[0,y^*]}y-\realc(y)+R'(\ln y^*-\ln y-1)$ is decreasing in $R'$ for $R'\in[y^*,\UpY]$. This completes the proof.
\qed
\end{proof}

\subsection{Proof of \Cref{thm:fix-y*-lowerconvex-opt}}

\begin{proof}

Substituting $t=\ln y$, we rewrite
\[\min_{y\in[0,y^*]}\reals(y)+R^*(\ln y^*-\ln y-1)=\min_{t\in(-\infty,\ln y^*]}\reals(e^t)-R^*(t-(\ln y^*-1)).\]

Denote $t^\dag=\ln y^*-1$. In the lower convex hull of $\mathcal{H}_{y^*}=\{(t,\reals(e^t)):t\in(-\infty,\ln y^*]\}$, the point at $t^\dag$ is a convex combination of at most two points $(t_1,\reals(e^{t_1})),(t_2,\reals(e^{t_2}))\in\mathcal{H}_{y^*}$. That is, there exist $p_1\in[0,1],p_2=1-p_1$ such that $p_1t_1+p_2t_2=t^\dag$, and
\[p_1\reals(e^{t_1})+p_2\reals(e^{t_2})=\lowconv[\mathcal{H}_{y^*}](t^\dag).\]

For any $R^*\in[0,\UpY]$, we have
\begin{align*}
&\min_{t\in(-\infty,\ln y^*]}\reals(e^t)-R^*(t-(\ln y^*-1))\\
\leq&\min_{t\in\{t_1,t_2\}}\reals(e^t)-R^*(t-(\ln y^*-1))\\
\leq&p_1(\reals(e^{t_1})-R^*(t_1-(\ln y^*-1)))+p_2(\reals(e^{t_2})-R^*(t_2-(\ln y^*-1)))\\
=&\lowconv[\mathcal{H}_{y^*}](t^\dag)-R^*(t^\dag-(\ln y^*-1))\\
=&\lowconv[\mathcal{H}_{y^*}](\ln y^*-1).
\end{align*}
That is, 
\[\max_{R^*\in[0,\UpY]}\min_{y\in[0,y^*]}\reals(y)+R^*(\ln y^*-\ln y-1)\leq\lowconv[\mathcal{H}_{y^*}](\ln y^*-1).\]

To achieve this upper bound, take $R^\dag$ as any subgradient of $\lowconv[\mathcal{H}_{y^*}](\cdot)$ at $t^\dag$, which exists due to convexity. Then it holds for all $t\in(-\infty,\ln y^*]$ that 
\[\reals(e^t)-\lowconv[\mathcal{H}_{y^*}](t^\dag)\geq R^\dag(t-t^\dag),\]
which implies that $\reals(e^t)-R^\dag t\geq \lowconv[\mathcal{H}_{y^*}](t^\dag)-R^\dag t^\dag$.
Then we have
\begin{align*}
&\min_{t\in(-\infty,\ln y^*]}\reals(e^t)-R^\dag(t-(\ln y^*-1))\\
\geq&\lowconv[\mathcal{H}_{y^*}](t^\dag)-R^\dag(t^\dag-(\ln y^*-1))\\
=&\lowconv[\mathcal{H}_{y^*}](\ln y^*-1).
\end{align*}
If $R^\dag\in[0,\UpY]$, we obtain the achievability by setting $R^*=R^\dag$. If $R^\dag>\UpY$, by the argument in the proof of \Cref{lem:reparameter-R}, we know $\min_{y\in[0,y^*]}\reals(y)+R(\ln y^*-\ln y-1)$ is decreasing in $R\in[\UpY,+\infty)$, so $R^*=\UpY$ also achieves the upper bound. If $R^\dag<0$, we have $\lim_{t\to-\infty}\reals(e^t)-R^\dag t=-\infty$, which contradicts $\reals(e^t)-R^\dag t\geq \lowconv[\mathcal{H}_{y^*}](t^\dag)-R^\dag t^\dag$, so this case is impossible. Therefore, the upper bound is always achievable by some $R^*\in[0,\UpY]$, that is,
\[\max_{R^*\in[0,\UpY]}\min_{y\in[0,y^*]}\reals(y)+R^*(\ln y^*-\ln y-1)=\lowconv[\mathcal{H}_{y^*}](\ln y^*-1).\]

Finally, for any $R^*\in[0,\UpY]$ such that $\min_{y\in[0,y^*]}\reals(y)+R^*(\ln y^*-\ln y-1)=\lowconv[\mathcal{H}_{y^*}](\ln y^*-1)$, we have 
$\reals(e^t)-R^*(t-(\ln y^*-1))\geq\lowconv[\mathcal{H}_{y^*}](\ln y^*-1)$
for all $t\in(-\infty,\ln y^*]$, which implies that
\[\reals(e^t)-\lowconv[\mathcal{H}_{y^*}](t^\dag)\geq R^*(t-t^\dag),\]
i.e., $R^*$ is a subgradient of $\lowconv[\mathcal{H}_{y^*}](\cdot)$ at $t^\dag=\ln y^*-1$.
\qed
\end{proof}
\subsection{Proof of \Cref{lem:position-y*>=y+}}

\begin{proof}
By the concavity of $\reals(\cdot)$, $\reals(y)$ is strictly increasing in $y\in[0,y^+)$ and weakly decreasing in $y\in[y^+,\UpY]$.

We prove that for any $y^*<y^+$, it holds that
\[\lowconv[\mathcal{H}_{y^*}](\ln y^*-1)<\lowconv[\mathcal{H}_{y^+}](\ln y^+-1).\]

By definition
$\mathcal{H}_{y^*}=\{(t,\reals(e^t)):t\in(-\infty,\ln y^*]\}$
and $\mathcal{H}_{y^+}=\{(t,\reals(e^t)):t\in(-\infty,\ln y^+]\}$, so $\mathcal{H}_{y^+}=\mathcal{H}_{y^*}\cup\{(t,\reals(e^t)):t\in(\ln y^*,\ln y^+]\}$.

Since the point $(\ln y^+-1,\lowconv[\mathcal{H}_{y^+}](\ln y^+-1))$ is a convex combination of points in $\mathcal{H}_{y^+}$, there exists a distribution $t\sim D_1$ supported in $(-\infty,\ln y^+]$ such that
\[\E_{t\sim D_1}[t]=\ln y^+-1,\text{ and }\E_{t\sim D_1}[\reals(e^t)]=\lowconv[\mathcal{H}_{y^+}](\ln y^+-1).\]

Define $D_2$ as the distribution of $t-\ln y^++\ln y^*$ when $t\sim D_1$. It holds that $\E_{t\sim D_2}[t]=\ln y^+-1-\ln y^++\ln y^*=\ln y^*-1$, and $D_2$ is supported in $(-\infty,\ln y^*]$. Consequently, by the definition of lower convex hull we have
\[\lowconv[\mathcal{H}_{y^*}](\ln y^*-1)\leq \E_{t\sim D_2}[\reals(e^t)].\]

Therefore, we have
\begin{align*}
&\lowconv[\mathcal{H}_{y^+}](\ln y^+-1)-\lowconv[\mathcal{H}_{y^*}](\ln y^*-1)\\
\geq&\E_{t\sim D_1}[\reals(e^t)]-\E_{t\sim D_2}[\reals(e^t)]\\
=&\E_{t\sim D_1}[\reals(e^t)-\reals(e^{t-\ln y^++\ln y^*})]\\
>&0.
\end{align*}
Here the strict inequality is because $\reals(y)$ is strictly increasing on $y\in[0,y^+]$. This completes the proof. \qed
\end{proof}

\subsection{Proof of \Cref{thm:principal-rev-inapproximate-welfare}}

\begin{proof}
We prove the proposition by constructing an instance represented by the feasible cost curve $\realc(\cdot)$, or equivalently the surplus curve $\reals(\cdot)$. 
Let $M>1$ be a sufficiently large constant to be determined later.

Let $\UpY=1$, and for $y\in[0,1]$, define
\[
\reals(y)=\begin{cases}
y,&y\in[0,\frac1M],\\
\frac1{M}(\ln(My)+1),&y\in[\frac1M,1].
\end{cases},
\quad\realc(y)=y-\reals(y).
\]
Since $\reals(y)$ is increasing in $y\in[0,1]$, we have that $y^+=1$ achieves the maximal surplus $S^*=\reals(y^+)=\frac1{M}(\ln(M)+1)$.
By \Cref{lem:position-y*>=y+} we have $y^*\geq y^+$, so $y^*=1$.

By \Cref{thm:fix-y*-lowerconvex-opt}, for $y^*=1$ we have
\[\max_{R^*\in[0,\UpY]}\min_{y\in[0,y^*]}\reals(y)+R^*(\ln y^*-\ln y-1)=\lowconv[\mathcal{H}_{1}](-1),\]
where $\mathcal{H}_{1}=\{(t,\reals(e^t)):t\in(-\infty,0]\}$.
Specifically, we have
\[\reals(e^t)=\begin{cases}
e^t,&t\in(-\infty,-\ln M],\\
\frac1{M}(t+\ln(M)+1),&t\in[-\ln M,0].
\end{cases},\]
which is convex in $t$. Consequently, the lower convex hull of $\mathcal{H}_{1}$ is itself, i.e.,
$\lowconv[\mathcal{H}_{1}](t)=\reals(e^t)$ for $t\in(-\infty,0]$.
Moreover, since $\reals(e^t)=\frac1{M}(t+\ln(M)+1)$ is linear on $[-\ln M,0]$, its unique subgradient at $t=-1$ is $\frac1{M}$. By \Cref{thm:fix-y*-lowerconvex-opt}, the optimal value of $R^*$ is uniquely $\frac1{M}$.
That is, under the agent's optimal strategy, the principal's revenue is $(1-\alpha^*)y^*=R^*=\frac1{M}$.

Taking $M>e^\frac{1}{\epsilon}$, we have
\[\frac{(1-\alpha^*)y^*}{S^*}=\frac{1/M}{(\ln(M)+1)/M}=\frac1{\ln(M)+1}<\epsilon.\]
\end{proof}

\subsection{Proof of \Cref{thm:L-bounded-revenueguarantee}}

\begin{proof}
Without loss of generality, we assume the action rewards are normalized to $\sup_{a\in F_B}y_a=1$. When $L<+\infty$, we have $\inf_{a\in F_B\setminus\{a_0\}}y_a=\frac1L$. This implies that $\reals(y)$ is linear in the interval $y\in[0,\frac1L]$.

We denote $y_1=\frac1L$, $t_1=\ln y_1$, and  $s_1=\reals(y_1)$. Then we have
\[\reals(y)=\frac{s_1}{y_1}y,\quad\forall y\in[0,y_1].\]

Let $\alpha^*$, $y^*$ be the optimal solution of the agent's problem \eqref{eq:Uver-with-MinCost}, implying the optimal disclosed action set $A^*$. By \Cref{thm:fix-y*-lowerconvex-opt}, we know 
that the agent's optimal utility $\Uver(A^*)$ is
\[\Uver(A^*)=\lowconv[\mathcal{H}_{y^*}](\ln y^*-1),\]
where $\mathcal{H}_{y^*}=\{(t,\reals(e^t)):t\in(-\infty,\ln y^*]\}$, and that the principal's revenue $R^*=(1-\alpha^*)y^*$ is a subgradient of $\lowconv[\mathcal{H}_{y^*}](\cdot)$ at $t^\dag:=\ln y^*-1$.

In $\mathcal{H}_{y^*}$, the linear piece of $\reals(y)$ on $y\in(0,\frac1L]$ is transformed to $\{(t,\frac{s_1}{y_1}e^t):t\in(-\infty,t_1]\}$. 
Since $R^*$ is a subgradient of $\lowconv[\mathcal{H}_{y^*}](\cdot)$ at $t^\dag$, it holds for all $t\in(-\infty,\ln y^*]$ that
\[R^*(t-t^\dag)+\lowconv[\mathcal{H}_{y^*}](t^\dag)\leq \lowconv[\mathcal{H}_{y^*}](t).\]
Since $\lowconv[\mathcal{H}_{y^*}](t)\leq \reals(e^t)\leq \frac{s_1}{y_1}e^t$ and $\lowconv[\mathcal{H}_{y^*}](t^\dag)=\Uver(A^*)$, this implies
\[\frac{s_1}{y_1}e^t\geq R^*(t-t^\dag)+\Uver(A^*).\]

We consider two cases: $R^*\leq s_1$ or $R^*>s_1$.

If $R^*\leq s_1$, take $t_2=\ln(\frac{y_1}{s_1}R^*)\leq t_1$, we have $R^*=\frac{s_1}{y_1}e^{t_2}\geq R^*(t_2-t^\dag)+\Uver(A^*)$, which implies that $(1-t_2+t^\dag)R^*\geq \Uver(A^*)$.

Observe that $s_1=\frac{s_1}{y_1}e^{t_1}=e^{t_1-t_2}R^*$.
By \Cref{thm:Uver-1/e-approximate-FBwelfare}, we have $\Uver(A^*)\geq \frac1e S^*$, which implies that $\frac1e\cdot s_1=\frac1e\cdot \reals(y_1)\leq \frac1e\cdot S^*\leq \Uver(A^*)$.
Therefore, it holds that
\[e^{t_1-t_2-1}R^*\leq  \Uver(A^*).\]
Combining with $(1-t_2+t^\dag)R^*\geq \Uver(A^*)$, we have
\[e^{t_1-t_2-1}\leq 1-t_2+t^\dag.\]
Define $\delta=t_1-t_2-1$, we have $e^{\delta}\leq\delta+t^\dag-t_1+2$. It follows that $\delta\leq \ln(\delta+t^\dag-t_1+2)\leq \frac1e(\delta+t^\dag-t_1+2)$, i.e., $\delta\leq \frac1{e-1}(t^\dag-t_1+2)$. Recall that $(1-t_2+t^\dag)R^*\geq \Uver(A^*)$, we have
\[R^*\geq \frac{\Uver(A^*)}{1+t^\dag-(t_1-1-\delta)}= \frac{\Uver(A^*)}{2+t^\dag-t_1+\delta}\geq \frac{\Uver(A^*)}{2+t^\dag-t_1+\frac1{e-1}(t^\dag-t_1+2)}=\frac{\Uver(A^*)}{\frac{e}{e-1}(t^\dag-t_1+2)}.\]

If $R^*>s_1$, we have
$R^*>s_1=\frac{s_1}{y_1}e^{t_1}\geq R^*(t_1-t^\dag)+\Uver(A^*)$, which implies that
$(1-t_1+t^\dag)R^*\geq\Uver(A^*)$. 

Combining the two cases, since $\frac{e}{e-1}(t^\dag-t_1+2)>(1-t_1+t^\dag)$, it holds that
\[\frac{\Rdis}{\Udis}=\frac{R^*}{\Uver(A^*)}\geq \frac1{\frac{e}{e-1}(t^\dag-t_1+2)}=\frac1{\frac{e}{e-1}(\ln y^*-t_1+1)}\geq\frac1{\frac{e}{e-1}(\ln(L)+1)}.\]


By \Cref{thm:Uver-1/e-approximate-FBwelfare} we have $\Uver(A^*)\geq \frac1eS^*$, which implies 
\[\frac{\Rdis}{S^*}=\frac{R^*}{S^*}\geq\frac{\frac{e-1}{e^2}}{\ln(L)+1}.\]
\qed
\end{proof}

\subsection{Proof of \Cref{thm:impact-agentutility}}
\begin{proof}
Observe that $\Ucan=\Uver(F)$, i.e., the agent gets the original utility if she discloses $A=F$. It follows immediately that $\Udis=\max_{A\subseteq F}\Uver(A)\geq \Ucan$.

To show the existence of an instance where $\Udis>K\cdot \Ucan$, it suffices to recognize that there exists an instance with $\Ucan=0$ and $\Udis>0$. Recall \Cref{example:lowerconvexhull-optR}, where the feasible surplus curve is in the form of $\reals(y)=\theta y$ for $y\in[0,1]$, which leads to $\Udis=\Rdis=\frac1e\theta$. In the canonical model, since the cost curve is $\realc(y)=y-\reals(y)=(1-\theta)y$, the principal's optimal contract incentivizes the action reward $1$ by setting $\alpha=1-\theta$, inducing $\Rcan=\theta$ and $\Ucan=0$. This completes the proof.
\qed
\end{proof}

\subsection{Proof of \Cref{thm:impact-principalrevenue}}

\begin{proof}
We firstly prove that $\Rdis\leq\Rcan$ always holds under verifiable costs and binary outcome space.
Consider any instance represented by the feasible surplus curve $\reals(\cdot)$.
Let $\alpha^*,y^*$ represent the agent's optimal strategy as in \eqref{eq:Uver-with-MinCost}, inducing revenue $\Rdis=R^*=(1-\alpha^*)y^*$ for the principal. By \Cref{thm:fix-y*-lowerconvex-opt}, $R^*$ is a subgradient of $\lowconv[\mathcal{H}_{y^*}](\cdot)$ at $\ln y^*-1$. By the definition of a subgradient, we have $\lowconv[\mathcal{H}_{y^*}](\ln y^*)-\lowconv[\mathcal{H}_{y^*}](\ln y^*-1)\geq R^*(\ln y^*-(\ln y^*-1))=R^*$.

Recall that $\mathcal{H}_{y^*}=\{(t,\reals(e^t)):t\in(-\infty,\ln y^*]\}$. By the definition of the lower convex hull, we have that $\lowconv[\mathcal{H}_{y^*}](\ln y^*)=\reals(y^*)$, and that there exists a distribution $t\sim D_1$ supported on $(-\infty,\ln y^*]$ such that $\E_{t\sim D_1}[t]=\ln y^*-1$ and $\E_{t\sim D_1}[\reals(e^t)]=\lowconv[\mathcal{H}_{y^*}](\ln y^*-1)$. 
It follows that \[\E_{t\sim D_1}[\reals(y^*)-\reals(e^t)]=\reals(y^*)-\E_{t\sim D_1}[\reals(e^t)]=\lowconv[\mathcal{H}_{y^*}](\ln y^*)-\lowconv[\mathcal{H}_{y^*}](\ln y^*-1)\geq R^*.\]

In the canonical model, given the cost curve $\realc(y)=y-\reals(y)$, the principal can induce any action reward $y\in[0,\UpY]$ by the linear contract with $\alpha=\realc'_-(y)=1-\reals'_-(y)$, obtaining a revenue of $(1-\alpha)y=\reals'_-(y)\cdot y$. Therefore, we have $\Rcan=\max_{y\in[0,\UpY]}\reals'_-(y)\cdot y$. It follows that for any $y\in[0,y^*]$, we have $\reals'_-(y)\cdot y\leq \Rcan$.

Observe that for any $t\leq\ln y^*$, $\reals(y^*)-\reals(e^t)=\int_t^{\ln y^*}\reals'_-(e^z)e^zdz\leq \Rcan(\ln y^*-t)$, where the inequality is because $\reals'_-(e^z)\cdot e^z\leq \Rcan$ for all $z\leq \ln y^*$. It follows that
\[\Rdis=R^*\leq \E_{t\sim D_1}[\reals(y^*)-\reals(e^t)]\leq \E_{t\sim D_1}[\Rcan(\ln y^*-t)]=\Rcan(\ln y^*-(\ln y^*-1))=\Rcan.\]

Next, given any $\epsilon>0$, we construct an instance with $\Rdis<\epsilon\cdot \Rcan$.
Let $M>e$ be a constant to be determined later. Consider the instance with the feasible surplus curve
\[
\reals(y)=\min\{y,\frac1M\}.
\]
We have $\reals'_-(y)=\I[y\leq \frac1M]$ and $\Rcan=\max_{y\in[0,1]}\reals'_-(y)\cdot y=\frac1M$ in the canonical model.

With strategic disclosure, let the agent's optimal utility be represented by $y^*$ and $R^*$, i.e. the induced action reward and principal's revenue.
By \Cref{thm:fix-y*-lowerconvex-opt}, $R^*$ is a subgradient of $\lowconv[\mathcal{H}_{y^*}](\cdot)$ at $\ln y^*-1$. Moreover, we have $y^*\in\arg\max_{y\in[0,1]}\lowconv[\mathcal{H}_{y}](\ln y-1)$.

We show that $y^*=1$ in this instance. By \Cref{lem:position-y*>=y+} we have $y^*\geq y^+=\frac1M$, so it suffices to prove that $\lowconv[\mathcal{H}_{y}](\ln y-1)$ is strictly increasing on $y\in[\frac1M,1]$. Recall that $\mathcal{H}_{y}=\{(t,\reals(e^t)):t\in(-\infty,\ln y]\}$, where $\reals(e^t)=\min\{e^t,\frac1M\}$.
Since $\reals(e^t)$ is convex on $(-\infty,-\ln M]$ and is constant on $[-\ln M,\ln y]$, there exists some $t_y\leq -\ln M$ such that $\lowconv[\mathcal{H}_{y}](t)=\begin{cases}
    e^t,&t\leq t_y,\\
    e^{t_y}(1+t-t_y),&t\in[t_y,\ln y]
\end{cases}$ and that $e^{t_y}(\ln y-t_y)=\frac1{M}-e^{t_y}$, that is, the lower convex hull ``irons" the interval $[t_y,\ln y]$. Rewrite $\ln y=t_y+(\frac1{Me^{t_y}}-1)$. Since $\frac{d}{d t_y}(t_y+(\frac1{Me^{t_y}}-1))=1-\frac1{Me^{t_y}}\leq 0$ for any $t_y\leq -\ln M$, with the inequality strictly holding for $t_y<-\ln M$, it holds that $t_y$ is decreasing in $y$. 
Since $\lowconv[\mathcal{H}_{y}](\ln y-1)=\frac1{M}-e^{t_y}$ when $t_y\leq \ln y-1$, and $\lowconv[\mathcal{H}_{y}](\ln y-1)=\frac{y}{e}$ when $t_y\geq \ln y-1$, we have $\lowconv[\mathcal{H}_{y}](\ln y-1)$ is strictly increasing in $y\in[\frac1M,1]$.
This implies $y^*=1$.

Considering $\lowconv[\mathcal{H}_{1}](\cdot)$, there exists $t_{1}\leq -\ln M$ such that $e^{t_1}(0-t_1)=\frac1{M}-e^{t_1}$ and $\lowconv[\mathcal{H}_{1}](-1)=\frac1{M}-e^{t_1}$, while the subgradient of $\lowconv[\mathcal{H}_{1}](\cdot)$ at $-1$ is uniquely $e^{t_1}$. Then we have $R^*=e^{t_1}=\frac1{M(1-t_1)}\leq \frac1{M(1+\ln M)}$, and it follows that
\[\Rdis=R^*\leq \frac1{1+\ln M}\Rcan.\]
The statement is satisfied by taking $M\geq e^{\frac1\epsilon}$.\qed
\end{proof}

\subsection{Proof of \Cref{thm:impact-welfare}}

\begin{proof}
Recall that $\Udis\geq\frac1e S^*$ always holds by \Cref{thm:Uver-1/e-approximate-FBwelfare}. We construct an instance such that $\Rcan+\Ucan\leq \frac1{Ke}S^*$, by slightly modifying the instance in the proof of \Cref{thm:principal-rev-inapproximate-welfare}. 

Let $\delta\in(0,1)$ and $M>1$ be constants to be determined later. Consider the instance with the feasible surplus curve
\[
\reals(y)=\begin{cases}
y,&y\in[0,\frac1M],\\
\frac1{M}(\delta\ln(My)+1),&y\in[\frac1M,1].
\end{cases},
\]
and the minimum feasible cost
\[
\realc(y)=y-\reals(y)=\begin{cases}
0,&y\in[0,\frac1M],\\
y-\frac1{M}(\delta\ln(My)+1),&y\in[\frac1M,1].
\end{cases}
\]
It holds that $\realc'_-(y)=\begin{cases}
0,&y\in[0,\frac1M],\\
1-\frac{\delta}{My},&y\in(\frac1M,1].
\end{cases}$, and $S^*=\reals(1)=\frac{1+\delta\ln M}M$. 

In the canonical model, the principal can induce any action reward $y\in(\frac1M,1]$ by the linear contract with $\alpha=\realc'_-(y)=1-\frac{\delta}{My}$, or induce action reward $y=\frac1M$ by setting $\alpha=0$. In the former case, the revenue is $(1-\alpha)y=\frac{\delta}{M}$; in the latter case, the revenue is $\frac{1}{M}$. Therefore, the principal's optimal contract sets $\alpha=0$, inducing social surplus $\reals(\frac1M)=\frac1M$. That is,
\[\frac{\Udis+\Rdis}{\Ucan+\Rcan}\geq\frac{\frac1e S^*}{\frac1M}=\frac{1+\delta\ln M}e.\]
The statement is satisfied by taking any $\delta\in(0,1)$ and $M>e^{\frac{eK}{\delta}}$.
    \qed
\end{proof}
\subsection{Proof of \Cref{thm:Extension-GeneralOutcome-1/eS*}}
\begin{proof}
Assume $S^*>0$, otherwise the agent trivially obtains $0$ utility by $A=\{a_0\}$ with the null action $a_0=(q_{a_0},0)\in F_B$.

Let $a^+\in\arg\max_{a\in F_B}r(a)-c_a$ be an arbitrary feasible action achieving maximal surplus $r(a^+)-c_{a^+}=S^*$. 
For convenience, denote $r^+=r(a^+)$ and $c^+=c_{a^+}$.

Construct the disclosed action set $A$ as a continuum of actions
\[A=\{a_{(z)}=(q_{(z)},c_{(z)}):z\in[0,r^+]\},\]
where $q_{(z)}:=\frac{z}{r^+}q_{a^+}+(1-\frac{z}{r^+})q_{a_0}\in\Delta(\mathcal{O})$, and
\begin{align*}
c_{(z)}:=&\frac1e c^++\int_{\frac1e r^+}^{z}\max\{0,1-\tfrac{\frac1e(r^+-c^+)}{t}\}dt\\
=&\begin{cases}
\frac1e c^+ +z-\frac1e r^+ -\frac1e(r^+-c^+)\ln(\frac{e\cdot z}{r^+}),&z\in[\frac1e(r^+-c^+),r^+],\\
-\frac1e(r^+-c^+)\ln(\frac{r^+-c^+}{r^+}),&z\in[0,\frac1e(r^+-c^+)].\\
\end{cases}
\end{align*}
One can easily verify that $c_{(z)}$ is non-decreasing and convex, and that $c_{(z)}\geq \frac{z}{r^+}c_{a^+}$ for $z\in[0,r^+]$. Therefore $A$ is feasible as a disclosed action set, i.e., $A\subseteq F$.

Now we verify that $\Uver(A)\geq\frac1e S^*=\frac1e(r^+-c^+)$.
For any contract $w:\mathcal{O}\to[0,+\infty)$, define $\beta_1:=\E_{o\sim q_{a^+}}[w(o)]$ and $\beta_0:=\E_{o\sim q_{a_0}}[w(o)]$ as the expected payment for action $a^+$ and $a_0$, respectively. Observe that
\begin{align*}
\E_{o\sim q_{(z)}}[w(o)]=\frac{z}{r^+}\E_{o\sim q_{a^+}}[w(o)]+\left(1-\frac{z}{r^+}\right)\E_{o\sim q_{a_0}}[w(o)]=(\beta_1-\beta_0)\frac{z}{r^+}+\beta_0.
\end{align*}
Therefore, the incentivized action only depends on $(\beta_1-\beta_0)$. Since $r(a_0)=0$ and $r(a^+)>0$, outcomes $o$ with $r(o)>0$ in the support of $q_{a^+}$ never appear in the support of $q_{a_0}$. Therefore, the principal optimally sets $\beta_0=0$ without affecting the incentives.
Then it holds under the principal's optimal contract $w$ that $\E_{o\sim q_{(z)}}[w(o)]=\beta_1\frac{z}{r^+}$.

Since $c_{(z)}$ is constant on $z\in[0,\frac1e(r^+-c^+)]$, only actions $a_{(z)}$ with $z\geq \frac1e(r^+-c^+)$ may be incentivized.
The derivative of $c_{(z)}$ on $z\in[\frac1e(r^+-c^+),r^+]$ is $\frac{d c_{(z)}}{dz}=1-\frac{\frac1e(r^+-c^+)}{z}$.
Therefore, for any $z\in[\frac1e(r^+-c^+),r^+]$, the optimal contract incentivizing $a_{(z)}\in A$ sets
$\beta_1=r^+\cdot\frac{d c_{(z)}}{dz}=r^+\cdot(1-\frac{\frac1e(r^+-c^+)}{z})$, resulting in principal's revenue
\[r(a_{(z)})-\beta_1\frac{z}{r^+}=z-z(1-\frac{\frac1e(r^+-c^+)}{z})=\frac1e(r^+-c^+),\]
which is equal for all $z\in[\frac1e(r^+-c^+),r^+]$. By the assumption that the principal breaks ties in favor of the agent, the principal's optimal contract incentivizes the action $a_{(r^+)}$ by setting $\beta_1=r^+\cdot(1-\frac{\frac1e(r^+-c^+)}{r^+})=r^+-\frac1e(r^+-c^+)$, inducing payment $\beta_1\frac{r^+}{r^+}=r^+-\frac1e(r^+-c^+)$ and cost
\[c_{(r^+)}=\frac1e c^+ +r^+-\frac1e r^+ -\frac1e(r^+-c^+)\ln(\frac{e\cdot r^+}{r^+})=r^+-\frac2e(r^+-c^+).\]
Consequently, the agent's utility is
\[\Uver(A)=\beta_1\frac{r^+}{r^+}-c_{(r^+)}=\frac1e(r^+-c^+)=\frac1e S^*.\]

It follows that $\Udis\geq\Uver(A)=\frac1{e}S^*.$
\qed
\end{proof}

\end{document}